\documentclass{amsart}
\usepackage{graphicx} 
\usepackage{hyperref}
\usepackage{amsmath,amsfonts,amssymb,amsthm,mathrsfs,tikz-cd, enumerate,mathtools,amscd,tikz-cd}

\usepackage{epigraph}
\usepackage[T2A]{fontenc}

\DeclarePairedDelimiterX\set[1]\lbrace\rbrace{#1}

\theoremstyle{plain}
\newtheorem{definition}{Definition}[section]

\newtheorem{theorem}{Theorem}[section]
\newtheorem{prop}{Proposition}[section]

\newtheorem{corollary}{Corollary}[section]

\newcommand{\ra}{\rightarrow}


\newcommand{\RR}{{\mathbb R}}
\newcommand{\NN}{{\mathbb N}}
\newcommand{\ZZ}{{\mathbb Z}}
\newcommand{\CC}{{\mathbb C}}

\newcommand{\FF}{{\mathbb F}}


\newcommand{\SA}{{\mathscr A}}


\newcommand{\M}{{\mathbf M}}
\newcommand{\R}{{\mathbf R}}
\newcommand{\K}{{\mathbf K}}

\newcommand{\tX}{{\tilde X}}
\newcommand{\tZ}{{\tilde Z}}

\title{Chaos and thermalization in Clifford--Floquet  dynamics}
\author{Anton Kapustin}
\address{California Institute of Technology}
\author{Daniil Radamovich}
\address{University of California, Berkeley}

\begin{document}

\begin{abstract}
    We study the ergodic properties of a unitary Floquet dynamics arising from the repeated application of a translationally-invariant Clifford Quantum Cellular Automata to an infinite system of qubits in $d$ dimensions. One expects that if the QCA does not exhibit any periodicity, a generic initial state of qubits will thermalize, that is, approach the infinite-temperature state. We show that this is true for many classes of states, both pure and mixed. In particular, this is true for all initial states that are short-range entangled and close to the equilibrium state. We also point out a subtle distinction between weak and strong thermalization. 
\end{abstract}

\maketitle

\section{Introduction}
\epigraph{Сказать, оно, конечно, всё можно, а ты поди продемонстрируй!}{D. I. Mendeleev}

Ergodic theory characterizes the long time behavior of deterministic classical dynamical systems and identifies those that behave "chaotically" at long time scales. It also provides quantitative measures of deterministic chaos (for example, the Kolmogorov-Sinai entropy). For a readable introduction to ergodic theory see e.g. \cite{CFS}. There are many examples of classical dynamical systems exhibiting deterministic chaos: special types of billiards \cite{billiards}, hyperbolic automorphisms of tori \cite{CFS} and, more generally, Anosov diffeomorphisms and flows \cite{KatokHasselblatt}. Yet another arena for ergodic theory is provided by cellular automata \cite{Pivatoreview}.  

The theory of deterministic quantum chaos is much less developed. Only the lowest rungs of the classical ergodic hierarchy (ergodicity, weak mixing, mixing) have reasonable quantum counterparts \cite{AlickiFannes}. While quantum analogs of the Kolmogorov-Sinai entropy have been defined \cite{AlickiFannes,CNT}, much less is known about them. 

A better understanding of quantum chaos could clarify the foundations of statistical mechanics and thermodynamics. For example, a common (although not unanimous) belief going back to Ludwig Boltzmann and Paul and Tatiana  Ehrenfest is that the tendency of closed systems to thermalize is related to chaotic behavior. Assuming this, the problem is to understand why chaos is a robust property of commonly occurring quantum systems. 

In both physics and mathematics, examples usually come first, and general theory is created as a means of understanding them. To develop a theory of quantum chaos, it would be helpful to have a supply of tractable quantum systems that are universally agreed to be chaotic in an informal sense.

One natural place to look for such examples is the theory of Quantum Cellular Automata (QCAs). A QCA $\alpha$ is a unitary map that acts on a system of qubits (or more generally, qudits) and has good locality properties: if $a$ is an observable localized at a point $j$, $\alpha(a)$ is localized on a ball of radius $R$ and center $j$, where $R$ can be taken the same for all $j$ and $a$. Iterating a QCA gives rise to unitary Floquet dynamics. A general QCA does not have any integrals of motion, hence for a generic initial state, one expects the system to heat up indefinitely, which means that at long times the state of any finite subsystem approaches the infinite-temperature state.\footnote{Here and below we consider infinite systems of qubits. Unitary evolution of finite systems is quasiperiodic, so thermalization of such systems can occur only when they are coupled to a bath.}

To test this, one can begin by looking at an especially tractable type of QCAs: Clifford QCAs. Consider a finite or infinite system of qubits. Its algebra of observables is generated by Pauli matrices $X_j,Z_j$ where the index $j$ labels the qubits. A Clifford QCA is a QCA which maps each Pauli matrix to a monomial in Pauli matrices. One can show that  translationally-invariant (TI) Clifford QCAs are essentially equivalent to a special type of classical Cellular Automata (CA) \cite{CQCA1}, therefore the methods used to study the latter give insight into the behavior of the former. 

The first study of the Floquet dynamics generated by TI Clifford QCAs was undertaken in \cite{CQCA2} and was limited to qubit chains with a single qubit per site. Such TI Clifford QCAs admit a trichotomy into periodic, glider, and fractal types. From the point of view of ergodic theory, fractal QCAs are the most interesting ones. In particular, Ref. \cite{CQCA2} argued that fractal TI Clifford QCAs thermalize arbitrary TI product states. This means that for any local observable $a$ one has
\begin{equation}\label{eq:thermnaive}
    \lim_{n\ra\infty}\langle \alpha^n(a)\rangle_0=\langle a\rangle_\infty 
\end{equation}
where $\langle\ldots\rangle_0$ and $\langle\ldots\rangle_\infty$ are the initial product state and the infinite-temperature state, respectively. This property indicates that fractal TI Clifford QCAs are chaotic quantum dynamical systems.\footnote{The fractal condition is strictly stronger than the mixing condition. As discussed below, there are many TI Clifford QCAs which are mixing and integrable. These examples illustrate that mixing and chaos should not be identified.} Ref. \cite{CQCA2} shows that fractal TI Clifford QCAs also thermalize  arbitrary TI Pauli stabilizer states. We are not aware of any other example of a deterministic many-body system, whether classical or quantum, where thermalization was shown for a comparably large class of initial states. 

In this note we re-examine the Clifford-Floquet dynamics by making a fuller use of the machinery of classical CAs. First of all, we explain how to generalize most of the results of Ref. \cite{CQCA2} to systems with an arbitrary number of qubits per site and to higher spatial dimensions. The key notion is that of a diffusive CA. It was introduced in \cite{PV1,PV2} and further studied in \cite{ETDS}. Diffusive Clifford QCAs  are a generalization of fractal QCAs of Ref. \cite{CQCA2}. 

Second, we greatly enlarge the class of states that can be shown to thermalize under the action of diffusive  Clifford QCAs. In particular, we prove that any short-range entangled state which is sufficiently close to the infinite-temperature state is thermalized. 
This strengthens the case that diffusive Clifford QCAs are examples of chaotic deterministic quantum dynamics. 

Third, we point out a gap in the proof of thermalization of TI product states in \cite{CQCA2}. Specifically, the arguments in Ref. \cite{CQCA2} do not establish that the sequence on the left-hand side of eq. (\ref{eq:thermnaive}) has a limit for an arbitrary fractal QCA. One can prove a weaker statement: a fractal Clifford QCA thermalizes any TI product state for almost all times. This means that the limit exists and is equal to the right-hand side after we exclude a subset of times which is negligible in a suitable sense. We will call this property {\it weak thermalization} as opposed to strong thermalization described by eq. (\ref{eq:thermnaive}). We will see that it is much easier to establish weak thermalization for other classes of states as well.

The paper is organized as follows.  Section 2 is preparatory. We recall the relation between Clifford QCAs and classical cellular automata (CAs) and  discuss Clifford QCAs from the point of view of ergodic theory. We point out that the quantum ergodic hierarchy collapses in this case:  ergodic is equivalent to mixing is equivalent to $r$-mixing for all $r>2$. In Section 3 we explain a gap in the proof of Proposition III.2 in \cite{CQCA2} and show how it can be repaired using the results of \cite{ETDS} at the price of weakening the conclusion. We also discuss the growth of operator support under fractal/diffusive Clifford-Floquet dynamics and show that in some cases the support grows only logarithmically with time. 
In Section 4 we discuss which classes of states are thermalized by diffusive Clifford-Floquet dynamics. This section contains the main new results of the paper. In Section 5 we present numerical evidence that diffusive Clifford QCAs thermalize some states which are not covered by our rigorous results. In Section 6 we discuss our findings.

We are grateful to Bowen Yang and B. Hellouin de Menibus for discussions. This work was supported in part by the U.S.\ Department of Energy, Office of Science, Office of High Energy Physics, under Award Number DE-SC0011632 and by the Simons Investigator Award.

\section{Clifford QCAs and pseudo-unitary CAs}

In this paper we consider translationally-invariant QCAs acting on an infinite chain of qubits, or more generally, on an infinite system of qubits on a hypercubic lattice $\ZZ^d\subset\RR^d$. Since we deal almost exclusively with translationally-invariant QCAs, in what follows we omit the adjective "translationally-invariant". Non-translationally-invariant QCAs appear only in Section 4.

Let us review the connection between {\it Clifford} QCAs and classical CAs following \cite{CQCA1,CQCA2}. Let $N$ be the number of qubits per unit cell and $X_i^\mu,Z_i^\mu$, $i\in\ZZ^d$, $\mu\in \{1,\ldots,N\}$ be the $X$ and $Z$ Pauli matrices acting site-wise. Pauli matrices are generators of the algebra of local observables $\SA$. A basis in this algebra is given by all monomials in Pauli matrices. Such monomials are in one-to-one correspondence with Laurent polynomials in variables 
$u_1,\ldots, u_d$ with coefficients in $\FF_2^{2N}$, where $\FF_2=
\{0,1\}$ is the field with two elements. In what follows we will denote the $d$-tuple $(u_1,\ldots,u_d)$ by $u$. If $k\in\ZZ^d$, then $u^k$ denotes the monomial $u_1^{k_1}\ldots u_d^{k_d}$, while $u^{-1}$ denotes $(u_1^{-1},\ldots,u_d^{-1})$.

Let us denote by $\M$ the set of all Laurent polynomials in $u$ with coefficients in  $\FF_2^{2N}$ and by $P_q$ the Pauli monomial corresponding to a Laurent polynomial $q(u)$. To make this well-defined, one needs to choose an ordering on Pauli matrices, otherwise the map from Laurent polynomials to Pauli monomials is defined only up to a sign. In this paper, such signs are not going to be important, so we do not explicitly specify an ordering. For any $q,q'\in \M$ we have $P_q P_{q'}=\pm P_{q+q'}$.

A Clifford QCA is a QCA $\alpha$ which maps a Pauli monomial to a Pauli monomial. Since we assume translational invariance, the action on the corresponding Laurent polynomials is given by
\begin{align}
    q(u)\mapsto L(u) q(u),
\end{align}
where $L(u)$ is a non-degenerate $2N\times 2N$ matrix whose elements are Laurent polynomials in $u$ with coefficients in $\FF_2$. For example, the QCA which shifts all qubits by one site in the direction $a\in \{1,\ldots,d\}$ corresponds to  $L(u)=u_a\cdot 1_{2N}$. If we choose an ordering convention for Pauli matrices, then $L(u)$ uniquely determines the corresponding Clifford QCA. Thus we have for any $q\in \M$: 
\begin{align}
    \alpha(P_q)=\pm P_{Lq},
\end{align}
where the sign depends on the chosen ordering. 

A matrix $L(u)$ as above also defines an update rule for a classical cellular automaton with $2^{2N}$ states per site labeled by elements of a finite abelian group $\FF_2^{2N}$. Namely, if we encode a point $x\in (\FF_2^{2N})^{\ZZ^d}$ as a formal power series
\begin{equation}
    x(u)=\sum_{k\in\ZZ^d} x_k u^k\in \FF_2^{2N}((u,u^{-1})),\quad x_k\in \FF_2^{2N},
\end{equation}
then the update rule is
\begin{equation}
    x(u)\mapsto x'(u)=\left(L\left(u^{-1}\right)^T\right)^{-1} x(u).
\end{equation}
Since the update rule is linear and invertible, the corresponding CA is linear and invertible. Dually, we can consider an update rule for continuous functions on the compact topological space $\Upsilon=(\FF_2^{2N})^{\ZZ^d}$. Any such function can be expanded in a convergent series in  characters of the abelian group $\Upsilon$, so it is sufficient to describe the update rule for characters. Characters can be identified as Laurent polynomials in $u$ with coefficients in $\FF_2^{2N}$: to every $q=\sum_j q_j u^j\in\M$ one attaches a function
\begin{align}
    \chi_q(x)=\prod_{j\in\ZZ^d} (-1)^{\sum_{\mu=1}^{2N} q_j^\mu x_j^\mu}.
\end{align}
The dual update rule is $\chi_q\mapsto \chi_{L q}$. The QCA/CA correspondence thus maps Pauli monomials to characters of $\Upsilon$.

Since $\alpha$ is an automorphism of the algebra of local observables, the update rule satisfies an additional condition which we call pseudo-unitarity. Let $\R$ be the ring of Laurent polynomials with coefficients in $\FF_2$. One can think of $L$ as an element of the nonabelian group $GL(2N,\R)$ and $\M$ as a free rank-$2N$ module over this ring. Let us define involutions on $\R$ and $\M$ by letting $\bar r(u)=r(u^{-1})$,  $\bar q(u)=q(u^{-1})$ for any $r\in \R$ and any $q\in \M$. Let us define a function $\Omega:\M\times \M\ra \R$ by
\begin{align}
    \Omega(q,q')=\sum_{a,b}\omega^{ab} \bar q_a q'_b,
\end{align}
where $\omega$ is a $2N\times 2N$ matrix with values in $\FF_2$:
\begin{align}
    \omega=
    \begin{pmatrix}
        0 & 1_N \\
        1_N & 0 
    \end{pmatrix}
\end{align}
The function $\Omega$ is $\R$-linear in the second argument: $\Omega(q,rq')=r\Omega(q,q')$ for all $r\in\R$ and all $q,q'\in\M$. It is anti-linear in the first one: $\Omega(r q,q')=\bar r\Omega(q,q')$. It also satisfies $\Omega(\bar q,\bar q')=\overline {\Omega(q',q)}$. We say that $L\in GL(2N,\R)$ is pseudo-unitary if $\Omega(Mq,Mq')=\Omega(q,q')$. It is easy to see that Clifford QCAs are in one-to-one  correspondence with pseudo-unitary elements of $GL(2N,\R)$. This bijection is compatible with the group structure up to signs: if $L,L'$ are pseudo-unitary matrices corresponding to Clifford QCAs $\alpha,\alpha'$, then for any $q\in \M$ we have 
\begin{align}
    (\alpha\circ\alpha')(P_q)=\pm P_{LL'q}. 
\end{align}
In particular, the action of the automorphism $\alpha^n$ on Pauli monomials is determined, up to signs, by the matrix $L^n$. This maps many questions about Clifford QCAs to questions about pseudo-unitary linear CAs. 

As an application of the QCA/CA correspondence, let us prove the following result.
\begin{theorem}
    The following statements are equivalent for a  Clifford QCA.
    \begin{itemize}
        \item The QCA is ergodic.
        \item The QCA is mixing.
        \item The QCA is $r$-mixing for all $r\geq 2$.
        \item The corresponding CA is ergodic.
        \item The corresponding CA is mixing.
        \item The corresponding CA is $r$-mixing for all $r\geq 2$.
        \end{itemize}
\end{theorem}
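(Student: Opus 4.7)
The plan is to split the proof into two independent steps. First (``vertical''), I would show that each of the three QCA properties coincides with its classical counterpart via the QCA/CA dictionary already developed in the excerpt. Second (``horizontal''), I would collapse the classical hierarchy by invoking a standard theorem on automorphisms of compact abelian groups.

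For the vertical step I would exploit the fact that each of ergodicity, mixing, and $r$-mixing can be tested on the dense subspaces of Pauli monomials (on the QCA side) and of characters of $\Upsilon$ (on the CA side). The tracial (infinite-temperature) state $\tau$ satisfies $\tau(P_{q_1}\cdots P_{q_r}) = \pm\,\delta_{q_1+\cdots+q_r,\,0}$ because $P_q P_{q'}=\pm P_{q+q'}$ and $\tau$ kills all non-identity Pauli monomials, while the Haar measure $\mu$ on $\Upsilon$ gives $\int \chi_{q_1}\cdots \chi_{q_r}\, d\mu = \delta_{q_1+\cdots+q_r,\,0}$. Combined with $\alpha^n(P_q)=\pm P_{L^n q}$ and the dual rule $\chi_q\circ T^n = \chi_{L^n q}$ (where $T$ denotes the linear CA on $\Upsilon$), the quantum and classical $r$-point correlation functions coincide up to an overall sign. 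Since these signs cannot change whether a Cesaro average vanishes in the $N\to\infty$ limit or whether a correlator tends to zero as the gaps between the $n_i$ diverge, each QCA property is equivalent to the corresponding CA property, yielding the three vertical equivalences.

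For the horizontal step the trivial chain $r$-mixing $\Rightarrow$ mixing $\Rightarrow$ ergodicity holds in any dynamical system, so only the converse direction on the classical side needs work. Here I would invoke the classical theorem of Rokhlin that every ergodic surjective endomorphism of a compact metrizable abelian group is mixing of all orders with respect to Haar measure. Applied to the linear CA induced by $L$ on $\Upsilon = (\FF_2^{2N})^{\ZZ^d}$, this immediately upgrades ergodicity of $T$ to $r$-mixing for every $r\geq 2$ and closes the loop.

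The main obstacle, and essentially the only substantive input, is Rokhlin's higher-order mixing theorem. In principle one could give a self-contained combinatorial proof tailored to our $\FF_2$-linear setting by analyzing relations of the form $\sum_i L^{n_i} q_i = 0$ in $\M$ when the differences $n_{i+1}-n_i$ diverge, deducing a contradiction with the infiniteness of $L$-orbits of non-zero elements (which is Halmos's characterization of ergodicity in the algebraic setting). However, for a short note citing Rokhlin's result as a black box is the cleanest route; once it is in hand, combining with the vertical dictionary produces all six equivalences stated in the theorem.
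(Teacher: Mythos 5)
Your proof follows essentially the same route as the paper: the ``vertical'' equivalences come from matching the infinite-temperature state on Pauli monomials with the Haar measure on characters under the QCA/CA dictionary (both vanish on all nontrivial basis elements), and the ``horizontal'' collapse of the classical hierarchy is delegated to the Rokhlin/Shirvani--Rogers theorem, exactly as in the text. Your brief handling of the $\pm$ signs is at the same level of detail as the paper's own one-line appeal to the comparison of the two state formulas, so nothing further is needed.
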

Here, it is implicit that the state on the quantum spin chain is the infinite-temperature state $\omega_\infty$, while the probability measure for the CA is the normalized Haar measure.\footnote{By a state we mean a positive linear functional on the algebra of observables $\SA$. This covers both pure and mixed states. The value of a state $\omega$  on an observable $a$ will be denoted $\omega(a)$.} The equivalence of the last three conditions is, of course, well known \cite{Rokhlin1949,ShirvaniRogers}. Let us show their equivalence to the first three.

Recall that an automorphism of a $*$-algebra $\SA$ is called 2-mixing (or simply mixing) with respect to a state $\omega$ if it satisfies
\begin{align}\label{eq:mixing}
    \lim_{n\ra +\infty}\omega\left( a_0\,\alpha^n(a_1)\right)=\omega\left(a_0\right)\omega\left(a_1\right)
\end{align}
for all $a_0,a_1\in\SA$. If $\omega\left( a_0\,\alpha^n(a_1)\right)$ converges to $\omega\left(a_0\right)\omega\left(a_1\right)$ in the Ces\`{a}ro sense, then $\alpha$ is ergodic. If there is $r\in\NN$, $r\geq 3$, such that for all $a_0,\ldots,a_{r-1}\in\SA$ we have
\begin{align}\label{eq:rmixing}
    \lim_{n,k_1,\ldots,k_{r-2}\ra +\infty}\omega\left( a_0\alpha^n(a_1)\alpha^{n+k_1}(a_2)\ldots\alpha^{n+k_{r-2}}(a_{r-1})\right)=\prod_{j=0}^{r-1}\omega\left( a_j\right),
\end{align}
then $\alpha$ is called $r$-mixing with respect to $\omega$. Clearly, if $\alpha$ is $r$-mixing, then it is $r'$-mixing for all $r'<r$, as well as ergodic. In the case of interest to us, $\SA$ is the algebra of local observables of the spin chain. It is sufficient to check the $r$-mixing condition for Pauli monomials, since they form a basis. 

Ergodic, mixing, and $r$-mixing CAs are defined similarly, with $\SA$ replaced by the commutative algebra of continuous functions on the configuration space $\Upsilon$ and $\alpha$ replaced with the update rule for such functions. It is sufficient to check the desired properties  for characters of $\Upsilon$, since they form an orthonormal basis in the space of continuous functions on $\Upsilon$. 

For a quantum spin system, the natural state to consider is the infinite-temperature state $\omega_\infty$. It is uniquely determined by the condition $\omega(ab)_\infty=\omega (ba)_\infty$ for all $a,b\in\SA$. Its value on Pauli monomials is
\begin{align}\label{eq:tracialstate}
    \omega(P_q)_\infty=\left\{ \begin{matrix}1, & q=0\\
    0, & q\neq 0\end{matrix}\right.
\end{align}
The $r$-mixing property restricted to Pauli monomials says that the limits on the l.h.s. of eqs. (\ref{eq:mixing}) and (\ref{eq:rmixing}) vanish unless $a_j=1$ for all $j\in\{0,\ldots,r-1\}$.

For a CA, the natural state is the average over the normalized Haar measure on $\Upsilon=(\FF_2^{2N})^{\ZZ^d}$, i.e., the probability measure such that all $x_j$ are independent and each $x_j$ takes each value with equal probability . Its value on characters is
\begin{align}\label{eq:Haar}
    \int_\Upsilon\chi_q(x) d\mu(x)=\left\{ \begin{matrix}1, & q=0\\
    0, & q\neq 0\end{matrix}\right.
\end{align}
The equivalence of the $r$-mixing property (resp.  ergodicity) of a Clifford QCA $\alpha$ and the $r$-mixing property (resp. ergodicity) of the corresponding pseudo-unitary CA follows from the comparison of (\ref{eq:tracialstate}) and (\ref{eq:Haar}).

Armed with Theorem 1, we can now construct many examples of mixing Clifford QCAs. We use the following result from the theory of linear CAs \cite{ShirvaniRogers}.
\begin{theorem}\label{th:mixingCA}
    Let $L(u)\in GL(k,\R)$. The corresponding linear CA is mixing with respect to the Haar measure if and only if $L(u)^n$ has no eigenvectors in $\M=\R^k$ with eigenvalue $1$ for any $n>0$, or equivalently, if $\det(L(u)^n-1)\neq 0$ for all $n>0$. 
\end{theorem}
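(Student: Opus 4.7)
The plan is to unpack the mixing condition directly on characters and then translate it into a determinantal condition. Since the characters $\chi_q$ form an orthonormal basis of $L^2(\Upsilon,\mu)$ and densely span $C(\Upsilon)$ by Stone--Weierstrass (the characters are a $*$-subalgebra containing $1$ and separating points), it suffices to test mixing on pairs $\chi_q,\chi_{q'}$; the uniform bound $|\chi_q|\le 1$ together with an $\varepsilon/3$ approximation argument lifts the result from characters to all continuous functions. Using the dual update rule $\chi_{q'}\circ T=\chi_{L q'}$ and the multiplicativity $\chi_q\chi_{q'}=\chi_{q+q'}$, one computes
\[
\int_\Upsilon \chi_q(x)\,\chi_{q'}(T^n x)\,d\mu(x)=\int_\Upsilon \chi_{q+L^n q'}(x)\,d\mu(x),
\]
which by (\ref{eq:Haar}) equals $1$ when $q=L^n q'$ and $0$ otherwise, while $\int\chi_q\,d\mu\cdot\int\chi_{q'}\,d\mu$ vanishes unless $q=q'=0$. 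Mixing is therefore equivalent to the statement that for every nonzero $q'\in\M$ and every $q\in\M$, the equation $q=L^n q'$ has only finitely many solutions $n\in\NN$.

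Next I would translate this finiteness into a fixed-vector condition on $L$. If finiteness fails, two indices $n_1<n_2$ give $L^{n_2-n_1}(L^{n_1}q')=L^{n_1}q'$, exhibiting $L^{n_1}q'\in\M$ (nonzero since $L$ is invertible and $q'\neq 0$) as a fixed vector of $L^{n_2-n_1}$ with eigenvalue $1$. Conversely, if $L^m v=v$ for some nonzero $v\in\M$ and $m>0$, choosing $q=q'=v$ and restricting to $n=km$ makes the integral equal to $1$ along an infinite subsequence, so mixing fails. This establishes equivalence of mixing with the absence of nonzero fixed vectors of $L^n$ for all $n>0$.

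Finally, I would pass from the kernel condition to the determinantal one: because $\R=\FF_2[u_1^{\pm 1},\ldots,u_d^{\pm 1}]$ is an integral domain, a square matrix $A$ over $\R$ has a nonzero kernel in $\M=\R^k$ if and only if $\det A=0$ (pass to the fraction field, use ordinary linear algebra, and clear denominators). Applying this to $A=L^n-1$ yields $\det(L^n-1)\neq 0$ as the desired reformulation. The only step that I expect to require real care is the density argument in the first paragraph, since the rest reduces to a direct computation with Haar integrals, a pigeonhole application, and a standard fact about matrices over an integral domain.
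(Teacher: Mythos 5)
Your argument is correct. Note that the paper does not actually prove this statement---it is imported from the cited reference \cite{ShirvaniRogers}---but your proof is the standard one and is exactly consistent with the reduction the paper itself uses elsewhere (testing dynamical properties on the characters $\chi_q$, which form an orthonormal basis, and exploiting the orthogonality relation (\ref{eq:Haar})). The three steps---the computation $\int \chi_q\,(\chi_{q'}\circ T^n)\,d\mu = [\,q = L^n q'\,]$, the pigeonhole passage from two coincidence times $n_1<n_2$ to a nonzero fixed vector of $L^{n_2-n_1}$ (using invertibility of $L$ over $\R$), and the equivalence of a nontrivial kernel with vanishing determinant over the integral domain $\R$ via the fraction field---are all sound, so your writeup in fact supplies the proof the paper omits.
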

The condition that $L(u)^n$ has no unit eigenvalues for all $n>0$ has the following interpretation on the QCA side: no Pauli monomial behaves periodically with respect to the Clifford-Floquet dynamics. In Ref. \cite{CQCA2} Clifford QCAs which {\it fail} the mixing/ergodicity test above were called periodic. Thus, for Clifford QCAs "periodic" is synonymous with non-ergodic or non-mixing.

For example, let $d=1$ and $k=2$ and consider a CA  of the form 
\begin{equation}\label{eq:Mexample}
L(u)=\begin{pmatrix}
    0 & 1 \\ 1 & t(u)
\end{pmatrix}
\end{equation}
where $t(u)\in\R$. One can show that it is mixing if $t$ is non-constant, see Appendix. If $t(u)=t(u^{-1})$, then $L(u)$ is pseudo-unitary and thus defines a mixing Clifford QCA (in fact, a Clifford circuit). For example, for $t(u)=u+u^{-1}$ we get the QCA 
\begin{equation}\label{eq:glider}
    \alpha:X_n\ra Z_n,\ Z_n\mapsto X_n Z_{n-1} Z_{n+1},\quad n\in\ZZ,
\end{equation}
while for $t(u)=u+1+u^{-1}$ we get the QCA studied in Ref. \cite{Kentetal}:
\begin{equation}\label{eq:diffusiveexample}
    \alpha:X_n\ra Z_n,\ Z_n\mapsto X_n Z_n Z_{n-1} Z_{n+1},\quad n\in\ZZ.
\end{equation}

A more trivial example in 1d is the "shift QCA" which acts by translations to the left or to the right. That is, $L(u)=u^{\pm 1} 1_{2N}.$ The mixing property of the shift QCA arises from the fact that every Pauli monomial eventually gets shifted "to infinity". This example illustrates that mixing is a necessary but not a sufficient condition for chaos, and that mixing can coexist with integrability. 
A similar mechanism operates in the case of the QCA (\ref{eq:glider}): every Pauli monomial eventually decomposes into a configuration of widely separated "gliders" which have a fixed shape and move with a unit speed to the left or to the right \cite{CQCA2}. 

The mechanism that leads to mixing in the case of the QCA (\ref{eq:diffusiveexample}) is very different. It can be attributed to the unbounded growth of the support of any nontrivial Pauli monomial as a function of time.

\section{Strongly and weakly diffusive QCAs}

A special feature of quantum spin systems (and many-body quantum systems in general) is that they have a natural set of coarse-grained descriptions independent of any dynamics. We simply consider observables supported on a finite subset of the lattice. Larger sets correspond to finer coarse-graining. This suggests that the key to chaotic behavior is an unbounded growth of operator support as a function of time. As time goes by, increasingly finer coarse-graining is required to avoid information loss. To make this precise in the Clifford setting, let us define the Hamming weight of a Laurent polynomial $q\in\M$ to be the number of monomials in $q$. We will denote the Hamming weight of $q$ by $|q|$. It is a measure of the size of the support of the observable $P_q$.
\begin{definition}
    Let $L\in GL(2N,\R)$. $L$ is strongly diffusive if for any  non-zero $q\in \M$ we have $\lim_{n\ra\infty}|L^nq|=\infty$.  A Clifford QCA or a linear CA is called strongly diffusive if the corresponding matrix $L$ is strongly diffusive. 
\end{definition}
Strongly diffusive CAs were introduced in \cite{PV1}, where they are called diffusive CAs.

It is easy to see that every strongly diffusive Clifford QCA is mixing with respect to the infinite-temperature state. Indeed, if $a,b$ are Pauli monomials and $b$ is nontrivial, then $\lim_{n\ra\infty}| a\alpha^n(b)|=\infty$, hence there exists $n_0$ such that for all $n\geq n_0$ $a\alpha^n(b)$ is a nontrivial Pauli monomial and thus $\omega(a\alpha^n(b))_\infty=0$. On the other hand, in the previous section we saw examples of mixing Clifford QCAs which preserve the Hamming weight of some or all Pauli monomials. Thus strong diffusivity is strictly stronger than mixing. Similarly, every strongly diffusive CA is mixing with respect to the Haar state.

In general, it seems difficult to check whether a given linear CA or a Clifford QCA is strongly diffusive. There is a weaker property that is easier to check \cite{PV1}. A subset $J$ of the set of natural numbers is said to have density $1$ if $\lim_{n\ra\infty} \frac{1}{n}|J\cap [1,n]|=1$. The complement of a density-1 subset is called a zero-density subset.  Intuitively, a density-1 subset comprises almost all natural numbers, while a zero-density subset contains a vanishingly small fraction of naturals. For example, the set of all prime numbers is zero-density, and so is the set of all complete squares.
\begin{definition}\label{th:diffusiveindensity}
    $L\in GL(r,\R)$ is called weakly diffusive if for any non-zero $q\in\R^r$ there is a density-1 subset $J_q\subset \NN$ such that 
\begin{equation}
    \lim_{n\in J_q,n\ra\infty}|L^n q|=\infty.
\end{equation} 
A Clifford QCA or a linear CA are called weakly diffusive if the corresponding $L$ is weakly diffusive. 
\end{definition}
In Ref. \cite{PV1} weakly diffusive CAs are called diffusive in density.

In plain English, weakly diffusive dynamics has the property that the Hamming weight of a nontrivial  observable diverges if one excludes a vanishingly small fraction of times (a zero-density subset of $\NN$). We will show in the next section that weakly diffusive Clifford QCAs thermalize a large class of states for almost all times. 

Ref. \cite{ETDS} showed that weakly diffusive CAs admit an algebraic characterization.\footnote{The proof given in \cite{ETDS} is formulated for CAs in one spatial dimension, but it can be easily generalized to arbitrary $d$. We are grateful to B. Hellouin de Menibus for pointing this out.}
\begin{theorem}\label{th:diffusiveindensityalg}
A linear CA is weakly diffusive if and only if the equation $L^n q=u^k q$ for $q\in\M$ does not have non-zero solutions for all $n>0$ and all $k\in\ZZ^d$. Equivalently, the matrix $L$ satisfies $\det(L^n-u^k)\neq 0$ for all $n>0$ and all $k\in\ZZ$. 
\end{theorem}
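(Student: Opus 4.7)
The plan is to prove the two algebraic conditions are equivalent at the outset, then treat the easy and hard directions of the main equivalence in turn. The two algebraic conditions are equivalent because $\R = \FF_2[u_1^{\pm},\ldots,u_d^{\pm}]$ is an integral domain: the endomorphism $L^n - u^k I_{2N}$ of the free $\R$-module $\M$ is injective if and only if its determinant is nonzero. I would dispose of this equivalence immediately and then focus on the equivalence of weak diffusivity with the absence of nonzero solutions to $L^n q = u^k q$.

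For the easy direction, suppose an eigenvector exists: $L^{n_0} q_0 = u^{k_0} q_0$ for some $q_0 \neq 0$, $n_0 > 0$, $k_0 \in \ZZ^d$. Iterating gives $L^{m n_0} q_0 = u^{m k_0} q_0$ for every $m \geq 1$, so the Hamming weight $|L^{m n_0} q_0|$ is the constant $|q_0|$. The arithmetic progression $\{m n_0 : m \geq 1\}$ has positive density $1/n_0$, so every density-$1$ subset $J \subset \NN$ meets it in an infinite set along which $|L^n q_0|$ stays bounded; this rules out weak diffusivity.

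For the main direction, fix a nonzero $q \in \M$ and assume no nonzero solution to $L^n q' = u^k q'$ exists. Weak diffusivity for $q$ is equivalent to the statement that $S_B = \{n \in \NN : |L^n q| \leq B\}$ has density zero for every $B > 0$; I would argue this by contradiction. The crucial preliminary observation combines the hypothesis with the invertibility of $L$: if $L^{n_2} q = u^k L^{n_1} q$ for some $n_1 < n_2$ and $k \in \ZZ^d$, then $p := L^{n_1} q$ is a nonzero eigenvector of $L^{n_2 - n_1}$, contradicting the hypothesis. Hence the translation equivalence classes of the orbit $\{L^n q\}_{n \geq 0}$ under the action $q \mapsto u^k q$ are pairwise distinct. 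Combined with the light-cone bound $\operatorname{supp}(L^n q) \subseteq \operatorname{supp}(q) + [-nR, nR]^d$, where $R$ is the maximum $u$-degree appearing in the entries of $L$, this embeds $S_B \cap [1, N]$ injectively into the finitely many translation classes of Hamming weight $\leq B$ that fit inside a box of diameter $O(N)$.

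The hardest step is upgrading this injection into a density-zero bound, since a naive count of available translation classes only yields $|S_B \cap [1, N]| = O(N^{d(B-1)})$, which is insufficient to rule out positive density once $B \geq 2$. This is where I would invoke the refinement of [ETDS]: the characteristic-two structure of $\R$---specifically, the additivity of the Frobenius endomorphism $f \mapsto f^2$ and the recursive structure it induces on $L^{2n}$---constrains the way massive cancellations can repeatedly collapse $L^n q$ to a polynomial of weight $\leq B$, ultimately forcing the set of such $n$ to have density zero. I expect this step to be the main obstacle in writing out a complete proof. The generalization from $d = 1$ to arbitrary $d$ is straightforward once the one-dimensional argument is in hand, because the Frobenius commutes with each coordinate shift $u_i$ and the relevant estimates apply coordinate-wise, which is what the authors indicate in the footnote attributing the extension to B. Hellouin de Menibus.
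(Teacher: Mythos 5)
First, a point of context: the paper does not prove this theorem itself --- it is quoted from Ref.~\cite{ETDS}, with a footnote noting only that the proof there is written for $d=1$ and generalizes. So your proposal is being measured against a citation rather than an argument. Within that context, the parts you actually execute are correct: the equivalence of the two algebraic conditions via the integral-domain property of $\R$; the easy direction (a soliton $L^{n_0}q_0=u^{k_0}q_0$ gives an arithmetic progression of density $1/n_0$ along which $|L^nq_0|$ is constant, and any density-$1$ set meets it infinitely often); and the observation that, absent solitons, the polynomials $L^nq$ are pairwise translation-inequivalent (this uses the invertibility of $L$, which is indeed part of Definition 3.2).

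The genuine gap is exactly where you say it is, and it is the entire content of the theorem: showing that soliton-freeness forces $S_B=\{n:|L^nq|\le B\}$ to have density zero for $B\ge 2$. Your own count shows the injection into translation classes only yields $|S_B\cap[1,N]|=O(N^{d(B-1)})$, which proves nothing for $B\ge 2$, and the appeal to ``the refinement of [ETDS]'' and ``the characteristic-two structure'' is a pointer, not an argument: you never explain how the Frobenius interacts with the soliton-freeness hypothesis to exclude a positive-density set of low-weight times. Moreover, the mechanism you name does not transfer directly from your scalar intuition to the matrix case: writing $L(u)=\sum_k L_ku^k$, one has $L(u)^2=\sum_{k,l}L_kL_l\,u^{k+l}$, and the collapse to $\sum_kL_k^2u^{2k}$ fails because the constant matrices $L_k$ need not commute --- the identity $(a+b)^2=a^2+b^2$ in characteristic two requires commutativity. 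The actual argument of Ref.~\cite{ETDS} is a substantive piece of work (it is where the theorem lives), and reproducing it, or finding a genuinely independent route to the density-zero bound, is what a complete proof would require. As it stands, your write-up settles the trivial direction and the bookkeeping, and leaves the hard direction as a correctly framed but unexecuted plan --- no worse than the paper, which also outsources it, but not a proof.
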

Non-zero solutions of $L^nq=u^kq$ are characters  of $\Upsilon$ which are $n$-periodic in time up to a spatial translation by $k$ units. In Ref. \cite{ETDS} such characters are called {\it solitons}. From the QCA point of view, a soliton corresponds to a Pauli monomial $q$ that is invariant under a combination of some iterate of the QCA $\alpha$ and a spatial translation. Pairs $(n,k)\in \ZZ\times\ZZ^d$ label $\ZZ$-subgroups of the symmetry $\ZZ\times\ZZ^d$ of the problem (time translations times spatial translations). The above theorem shows that a Clifford QCA $\alpha$ is weakly diffusive  if and only if no local quantity is preserved by a $\ZZ$-symmetry. 

Ref. \cite{CQCA2} studied $d=1$, $N=1$ Clifford QCAs. It calls a non-zero solution of the equation $Lq=u^k q$ a {\it glider} for $M$. Thus saying that $M$ does not have solitons is equivalent to saying that $L^n$ does not have gliders for any $n>0$. Ref. \cite{CQCA2} called such Clifford QCAs fractal. Thus, soliton-free Clifford QCAs are a generalization of fractal Clifford QCAs to general $N$ and $d$. Comparing Theorem \ref{th:diffusiveindensity} and Theorem \ref{th:mixingCA}, we also see that every weakly diffusive Clifford QCA is mixing. 

Using Theorem \ref{th:diffusiveindensityalg}, it is easy to construct examples of weakly diffusive Clifford QCAs. For example, the 1d Clifford QCA whose matrix $L$ has the form (\ref{eq:Mexample}) for some nonconstant palindromic Laurent polynomial $t\in\R$ is weakly diffusive if and only if $t(u)\neq u^m+u^{-m}$ for some $m\in\NN$. For a proof, see Appendix. In the exceptional case $t(u)=u^m+u^{-m}$ one can explicitly construct a soliton: the vector $q(u)=(1,u^m)$ solves the equation $L q=u^m q$.

Prop. III.2 in \cite{CQCA2} states that if $M$ is fractal, then it strongly thermalizes any translationally-invariant product state. The proof hinges on the claim that $\lim_{n\ra\infty}|L^nq|=\infty$ for any non-zero $q\in\M$, i.e. on the strong diffusivity of $L(u)$. By  Theorem \ref{th:diffusiveindensityalg}, the fractal condition is equivalent to weak diffusivity. Thus, the proof of Prop. III.2 can go through as stated only if every weakly diffusive $L$ is strongly diffusive, at least in the case of a single qubit per site and $d=1$.

The argument for strong diffusivity in \cite{CQCA2} involves two steps. First, the sequence $|L^n q|$ is shown to be unbounded for any $q\neq 0$, i.e., $\limsup_n |L^n q|=\infty$. Second, it is argued (by referring to the proof of Prop. III.1) that an infinite sequence of times with a bounded $|L^n q|$ forces some power of $L$ to have a glider, leading to a contradiction and the conclusion that $\liminf_n |L^n q|=\infty$. The first step is valid (the unboundedness of the sequence $|L^nq|$ also follows from weak diffusivity). However, the second step appears to be faulty. The quickest way to see this is to consider the following linear CA with a single bit per site:
\begin{equation}
    x'_n=x_{n-1}+x_{n+1},\quad n\in\ZZ.
\end{equation}
The corresponding $1\times 1$ matrix $L$ is $L=u+u^{-1}\in\R$. It is easy to see that such  $L$ does not admit solitons, hence it is weakly diffusive. On the other hand, for any integer $m$ we have
\begin{equation}
L^{2^m}=u^{2^m}+u^{-2^m}.
\end{equation}
Hence $|L^{2^m}q|=2|q|$ for any $q$ and all sufficiently large $m$ which implies that $L$ is not strongly diffusive.\footnote{While this CA is not pseudo-unitary and not even invertible, and does not correspond to a Clifford QCA, the argument for the existence of a glider in \cite{CQCA2} does not use pseudo-unitarity or invertibility.}

Lacking a proof of strong diffusivity for a general soliton-free Clifford QCA, one may ask if there are any examples of Clifford QCAs which are provably strongly diffusive. Ref. \cite{ETDS} constructs two examples of invertible 1d CAs which are strongly diffusive. These CAs involve two bits per site and are not pseudo-unitary. But one can turn them into pseudo-unitary CAs by the following "doubling trick". For any $A\in GL(k,\R)$, we can define a pseudo-unitary element $L\in GL(2k,\R)$ by letting
\begin{align}
 L(u)=   \begin{pmatrix}
    A(u) & 0 \\ 0 & \left(A(u^{-1})^T\right)^{-1}.
\end{pmatrix}
\end{align}
It is clear that doubling a strongly diffusive CA gives a strongly diffusive pseudo-unitary CA. Ref. \cite{ETDS} shows that the following two 1d CAs are strongly diffusive:
\begin{equation}\label{eq:FG}
    F(u)=\begin{pmatrix} 0 & 1 \\ 1 & u\end{pmatrix},\quad G(u)=\begin{pmatrix} 0 & 1 \\ 1 & 1+u\end{pmatrix}.
\end{equation}
Therefore the following two 1d Clifford QCAs with two qubits per site  are strongly diffusive:
\begin{align}
\alpha_F: X_n\mapsto \tX_n,\ \tX_n\mapsto X_n \tX_{n+1},\ Z_n\mapsto \tZ_n Z_{n+1},\ \tZ_n\mapsto Z_n.
\end{align}
and
\begin{align}
\alpha_G: X_n\mapsto \tX_n,\ \tX_n\mapsto X_n \tX_n\tX_{n+1},\ Z_n\mapsto Z_n \tZ_n Z_{n+1},\ \tZ_n\mapsto Z_n.
\end{align}
Here, the pairs $X_n,Z_n$  and $\tX_n,\tZ_n$ describe the two qubits on the $n^{\rm th}$ site.

We could not find examples of Clifford QCAs (equivalently, pseudo-unitary CAs) which are weakly diffusive but not strongly diffusive. It might be that reversibility together with weak diffusivity imply strong diffusivity. For illustration purposes, in Fig. 1-3 we plotted the Hamming weight of $L^nq$ as a function of $n$ for the weakly diffusive $L$ of the form (\ref{eq:Mexample}) with $t(u)=u+1+u^{-1}$. We let $q$ be constant, so that $P_q$ is a single-site Pauli monomial, but we checked that the results for two-site and three-site Pauli monomials are similar. We see that although the Hamming weight has very large fluctuations, it appears to diverge as $n\ra\infty$. Note that the plots for $P_q=X_0$ and $P_q=Z_0$ appear identical. When plotted with a finer resolution, one finds that they are related by a spatial translation $j\mapsto j+1$. This is an accidental feature special to this example.

\begin{figure}
    \centering
    \includegraphics[width=1\linewidth]{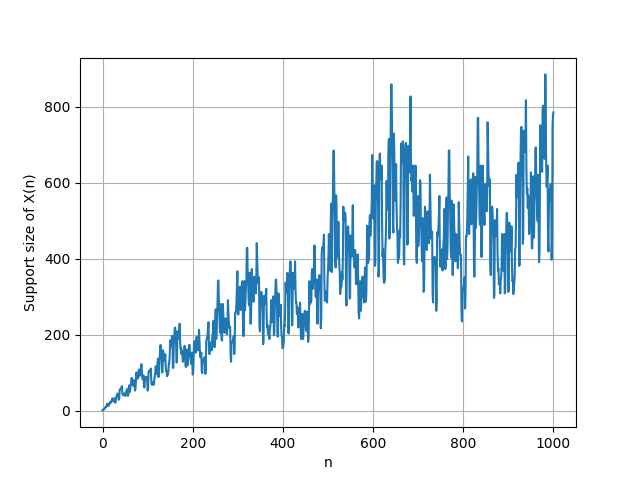}
    \caption{Support size of $X(n)=\alpha^n(X_0)$ as a function of $n$ for the QCA (\ref{eq:Mexample}) with $t(u)=u+1+u^{-1}$.}
    \label{fig:Fig1}
\end{figure}

\begin{figure}
    \centering
    \includegraphics[width=1\linewidth]{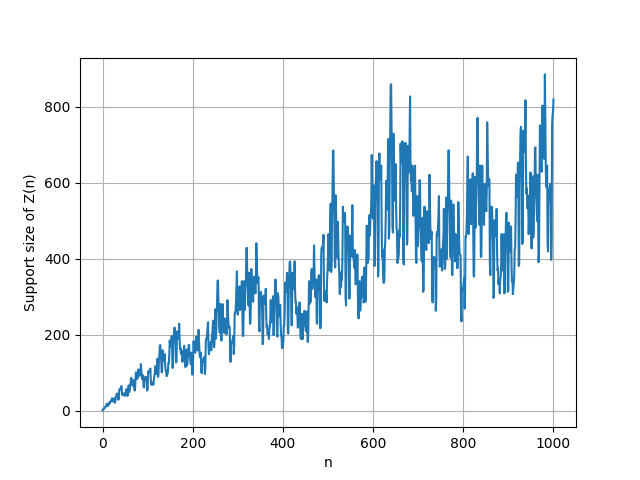}
    \caption{Support size of $Z(n)=\alpha^n(Z_0)$ as a function of $n$ for the QCA (\ref{eq:Mexample}) with $t(u)=u+1+u^{-1}$.}
    \label{fig:Fig2}
\end{figure}

\begin{figure}
    \centering
    \includegraphics[width=1\linewidth]{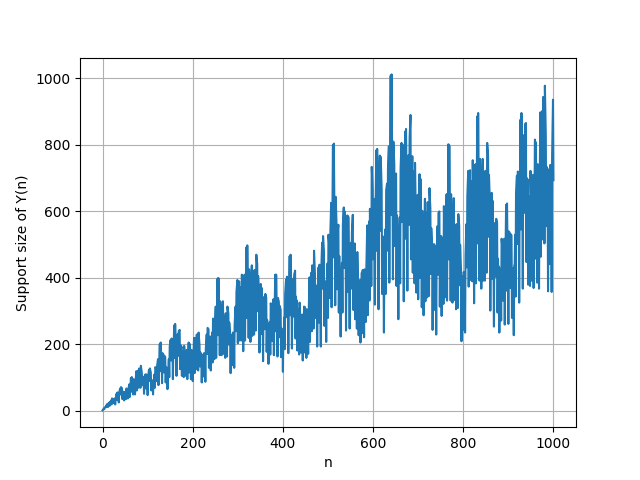}
    \caption{Support size of $Y(n)=\alpha^n(Y_0)$ as a function of $n$ for the QCA (\ref{eq:Mexample}) with $t(u)=u+1+u^{-1}$.}
    \label{fig:Fig3}
\end{figure}

For a strongly diffusive $L$, it is natural to inquire about the rate of growth of $|L^nq|$ as a function of $n$. Since $L$ has a finite range, $|L^nq|$ is upper-bounded by a linear function of $n$. On the other hand, we see that in general there are large downward fluctuations in $|L^nq|$, so a lower bound on $|L^nq|$ is likely to grow much slower. For concreteness, let $L(u)=F(u)$ where $F(u)$ is given by (\ref{eq:FG}). One can easily show that 
\begin{align}
    F^{2^r}=\begin{pmatrix}
        \sum_{j=1}^r u^{2^r-2^j} & u^{2^r-1}\\
        u^{2^r-1}  & u^{2^r}+\sum_{j=1}^r u^{2^r-2^j}.
    \end{pmatrix}
\end{align}
Thus $|F^{2^r}q|\leq (2r+1)|q|$. This implies that any lower bound on $|F^nq|$ cannot grow faster than $O(\log_2 n)$.

\section{Some classes of thermalized states}

Ref. \cite{CQCA2} studied 1d Clifford QCAs with a single qubit per unit cell and argued that two classes of TI states are strongly thermalized by any soliton-free QCA of this sort: arbitrary TI Pauli stabilizer states and arbitrary TI product states.\footnote{Ref. \cite{CQCA2} also shows that glider Clifford QCAs thermalize arbitrary TI Pauli stabilizer states.} The proof of the latter relies on strong diffusivity. As we saw in the previous section, there is a gap in the proof of Prop. III.2, hence strong thermalization of arbitrary product states cannot be inferred.

Instead, we have the following weaker result.
\begin{theorem}\label{th:fractaltherm}
Let $\alpha$ be a soliton-free 1d Clifford QCA and $\omega$ be a TI product state of a qubit chain (one qubit per site). Then $\alpha$ weakly thermalizes $\omega$. That is, for any local observable $a$ there is a density-1 subset $J_a\subset\NN$ such that
\begin{align}
         \lim_{n\in J_a} \omega(\alpha^n(a))=\omega_\infty( a). 
    \end{align}
\end{theorem}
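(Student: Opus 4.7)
The plan is to reduce the claim to statements about single Pauli monomials and then split into two cases according to the one-site structure of $\omega$. Every local observable $a$ is a finite linear combination $a=\sum_q c_q P_q$, and a finite intersection of density-$1$ subsets of $\NN$ is still density-$1$, so it suffices to produce, for each nonzero $q\in\M$, a density-$1$ set $J_q\subset\NN$ with $\lim_{n\in J_q}\omega(P_{L^n q})=0=\omega_\infty(P_q)$. Writing the one-site expectations as $c_X,c_Y,c_Z\in[-1,1]$, translational invariance and the product structure of $\omega$ give
\begin{equation*}
|\omega(P_r)|=|c_X|^{k_X(r)}\,|c_Y|^{k_Y(r)}\,|c_Z|^{k_Z(r)},
\end{equation*}
where $k_\bullet(r)$ counts the sites on which the Pauli monomial $P_r$ carries the corresponding letter. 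Set $p:=\max(|c_X|,|c_Y|,|c_Z|)\in[0,1]$.

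If $p<1$, then $|\omega(P_r)|\leq p^{|r|}$, where $|r|$ is the Hamming weight. By Theorem \ref{th:diffusiveindensityalg}, soliton-freeness of $\alpha$ is equivalent to weak diffusivity of $L$, so there is a density-$1$ subset $J_q$ with $|L^n q|\to\infty$ along $J_q$, and therefore $|\omega(P_{L^n q})|\leq p^{|L^n q|}\to 0$ along $J_q$.

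If $p=1$, the Bloch inequality $c_X^2+c_Y^2+c_Z^2\leq 1$ forces the other two single-site expectations to vanish, so $\omega$ is a TI pure product state aligned with a single Pauli axis; WLOG the axis is $Z$. Then $\omega(P_r)=\pm 1$ when $r$ lies in the rank-one submodule $V:=\R\,e_2\subset\M$ (i.e.\ $r$ contains only $I$'s and $Z$'s) and $\omega(P_r)=0$ otherwise. I will in fact prove something stronger than weak thermalization: the return set $R_q:=\{n\geq 1:L^n q\in V\}$ is \emph{finite}. Assume for contradiction that $R_q$ is infinite; pick $n_1<n_2$ in $R_q$ and write $L^{n_i}q=s_i e_2$ with $s_i\in\R\setminus\{0\}$ (nonzero since $L\in GL(2,\R)$ and $q\neq 0$). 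Applying $L^{n_2-n_1}$ to $s_1 e_2$ and equating to $s_2 e_2$ yields $s_1\,L^{n_2-n_1}e_2=s_2 e_2$; since $\R$ is an integral domain, the first coordinate of $L^{n_2-n_1}e_2$ must vanish, so $L^{n_2-n_1}e_2\in V$. Let $m^*$ be the smallest positive integer with $L^{m^*}e_2\in V$, and write $L^{m^*}e_2=c^* e_2$. Then $c^*$ is an eigenvalue of $L^{m^*}\in GL(2,\R)$, so it divides $\det(L^{m^*})=\det(L)^{m^*}\in\R^\times$. Since $\R=\FF_2[u,u^{-1}]$ is a UFD with unit group $\{u^k:k\in\ZZ\}$, the element $c^*$ must itself be a unit, hence $c^*=u^k$ for some $k\in\ZZ$. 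The identity $L^{m^*}e_2=u^k e_2$ with $m^*\geq 1$ and $e_2\neq 0$ is precisely a soliton, contradicting the hypothesis on $\alpha$. The sub-cases $c_X=1$ and $c_Y=1$ are handled identically after replacing $e_2$ by $e_1$ and $e_1+e_2$, respectively.

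The main obstacle is the $p=1$ case, because weak diffusivity controls only the norm $|L^n q|$ and not the direction of $L^n q$ inside $\M$. The crucial input that closes the gap is that invertibility of $L$ forces every eigenvalue of every iterate $L^n$ to be a unit in $\R$, which combined with the explicit description $\R^\times=\{u^k\}$ converts an apparent return of $L^n q$ to a proper rank-one submodule into an actual soliton, exactly the obstruction ruled out by the hypothesis.
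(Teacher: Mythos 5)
Your proof is correct, and the bulk of it (the case $p<1$) is exactly the paper's argument: bound $|\omega(P_r)|\le\lambda^{|r|}$, invoke Theorem \ref{th:diffusiveindensityalg} to get weak diffusivity, and intersect finitely many density-$1$ sets. Where you genuinely diverge is the degenerate case $p=1$. The paper disposes of it in one line by observing that $\omega$ is then a TI Pauli stabilizer state and citing Section III.C of \cite{CQCA2} for strong thermalization of such states; you instead give a self-contained argument that the return set $R_q=\{n: L^nq\in V\}$ to the stabilizing rank-one submodule $V$ is finite. Your mechanism --- two returns force $L^{n_2-n_1}e_2\in V$, and an eigenvalue of an iterate of an invertible $L$ must divide $\det(L)^{m^*}\in\R^\times$ and hence be $u^k$, producing a soliton --- is sound, and it is worth noting that it succeeds precisely where the analogous ``bounded weight forces a glider'' argument of \cite{CQCA2} (criticized in Section 3 of this paper) fails: you are tracking returns to a \emph{fixed} proper submodule rather than mere boundedness of $|L^nq|$, and you use invertibility of $L$ in an essential way (the counterexample $L=u+u^{-1}$ is not invertible). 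The payoff of your route is a proof that does not outsource the stabilizer case, and it incidentally shows strong (not just weak) decay of $\omega(\alpha^n(P_q))$ for nonzero $q$ in that case, matching what the citation provides; the cost is that your argument as written is specific to $N=1$ (rank-one $V$ in $\R^2$), whereas the citation's scope is exactly the $1$d single-qubit setting of the theorem anyway, so nothing is lost.
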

\begin{proof}
    If one of the numbers $|\omega(X_0)|,|\omega(Y_0)|,|\omega(Y_0)|$ is equal to $1$, then $\omega$ is a Pauli stabilizer state and is thermalized in the strong sense (Section III.C of \cite{CQCA2}). Thus, it is sufficient to consider the case when $|\omega(P)|<1$ for any on-site Pauli monomial $P$. Let $\lambda<1$ be the maximal value of $|\omega(P)|$ where $P\in \{X_0,Y_0,Z_0\}$. By Theorem \ref{th:diffusiveindensityalg}, $L$ is weakly diffusive, hence for any non-zero $q\in\M$ there is a density-1 $J_q\subset\NN$ such that 
    \begin{align}
         \lim_{n\in J_q} \lambda^{|L^n q|}=0. 
    \end{align}
    This implies $\lim_{n\in J_q} |\omega(\alpha^n(P_q))|=0$ for any non-zero $q$. Finally, any local observable $a$ can be written as  $\sum_{q\in K}  c_q P_q$ where the set $K$ is finite. Therefore, 
    \begin{align}
         \lim_{n\in J_a} \omega(\alpha^n(a))=c_0=\omega_\infty(a). 
    \end{align}
     where $J_a=\bigcap_{q\in K} J_{q_k}$ is density-1.
    \end{proof}

From a practical standpoint, weak thermalization is almost as good as strong thermalization. For example, it is easy to see that it implies convergence of time-averages of expectation values to their equilibrium values. That is, if a state $\omega$ is weakly thermalized by an automorphism $\alpha$, then for any local observable $a$ we have
\begin{align}
    \lim_{M\ra\infty} \frac{1}{M}\sum_{m=0}^{M-1}\omega(\alpha^{mT}(a))=\omega_\infty(a),
\end{align}
where $T\in\NN$ is arbitrary.

The above result can be generalized in various directions. Consider a soliton-free Clifford QCA in dimension $d$ and with an arbitrary number $N$ of qubits per site. First of all, one can prove that such a QCA weakly thermalizes a generic product state. To make this  precise, we introduce the following definition.
\begin{definition}
    For any state $\omega$ on a system of qubits, let $\lambda(\omega)=\sup_{P} |\omega(P)|$,
    where the supremum is taken over the set of all nontrivial Pauli monomials. $\omega$ is called P-generic if $\lambda(\omega)<1$.  
\end{definition}
Note that for a product state $\omega$ of a lattice system of qubits, P-genericity is equivalent to the requirement that there exists $\lambda<1$ such that $|\omega(P_q)|\leq\lambda$ for any on-site Pauli monomial $P_q$. Both pure and mixed states can be P-generic. For example, for a single qubit any pure state which is not an eigenstate of $X,Y$, or $Z$ is P-generic. A sufficient condition for P-genericity is being not too far from the infinite-temperature state. More precisely, we have a trivial 
\begin{prop}\label{th:trivial}
    If $\|\omega-\omega_\infty\|\leq \lambda<1$, then $\omega$ is P-generic and $\lambda(\omega)\leq\lambda$. 
\end{prop}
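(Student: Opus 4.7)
The plan is to unpack the definition of the norm on states (viewed as bounded linear functionals on the algebra of local observables) and apply it to nontrivial Pauli monomials, using two elementary facts: Pauli monomials are unitary, so have operator norm one, and $\omega_\infty$ annihilates every nontrivial Pauli monomial by (\ref{eq:tracialstate}).

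More concretely, I would proceed as follows. The norm on a bounded linear functional $\phi$ on the $C^*$-algebra $\SA$ (completed with respect to the operator norm) is $\|\phi\|=\sup_{\|a\|\leq 1}|\phi(a)|$. First, I would recall that any Pauli monomial $P_q$ is a unitary operator, hence $\|P_q\|=1$. Second, I would write, for any nontrivial $q\in\M$,
\begin{equation}
|\omega(P_q)| = |\omega(P_q) - \omega_\infty(P_q)| = |(\omega-\omega_\infty)(P_q)| \leq \|\omega-\omega_\infty\|\cdot\|P_q\| \leq \lambda.
\end{equation}
Taking the supremum over all nontrivial Pauli monomials gives $\lambda(\omega)\leq\lambda<1$, which by definition means $\omega$ is P-generic.

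Since the proposition is essentially a one-line consequence of the definitions, there is no real obstacle; the only minor subtlety is to make sure that the norm in the hypothesis $\|\omega-\omega_\infty\|\leq\lambda$ is interpreted as the dual norm on linear functionals on $\SA$, so that the duality bound $|(\omega-\omega_\infty)(a)|\leq\|\omega-\omega_\infty\|\,\|a\|$ applies. This is the standard convention for distances between states, so no extra work is needed.
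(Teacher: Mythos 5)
Your argument is correct and is exactly the intended one: the paper offers no written proof (it calls the proposition trivial), and the one-line estimate $|\omega(P_q)|=|(\omega-\omega_\infty)(P_q)|\leq\|\omega-\omega_\infty\|\,\|P_q\|\leq\lambda$, using that nontrivial Pauli monomials are unitary and annihilated by $\omega_\infty$, is precisely what the authors have in mind. Nothing is missing.
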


The same argument as in Theorem \ref{th:fractaltherm} proves 
\begin{theorem}\label{th:Pgeneric}
    Let $\alpha$ be a soliton-free Clifford QCA and let $\omega$ be a P-generic product state for a system of qubits in $d$ dimensions ($N$ qubits per unit cell). Then $\alpha$ weakly thermalizes $\omega$. If, in addition, $\alpha$ is strongly diffusive, then $\alpha$ strongly thermalizes $\omega$.
\end{theorem}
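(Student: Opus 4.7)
The plan is to run essentially the same argument as the proof of Theorem \ref{th:fractaltherm}, with modest adjustments to handle arbitrary $d$ and $N$ and to cover both the weak and strong cases in one sweep. The driving mechanism is unchanged: soliton-freeness is equivalent to weak diffusivity of $L$ by Theorem \ref{th:diffusiveindensityalg}, so along a density-$1$ subset the Hamming weight $|L^n q|$ tends to infinity; and a P-generic product state suppresses Pauli expectation values exponentially in the number of sites on which the Pauli monomial acts nontrivially.

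First I would make the product factorization precise. Writing $q=\sum_{k\in\ZZ^d} q_k\,u^k\in\M$ with $q_k\in\FF_2^{2N}$, the monomial $P_q$ equals, up to a sign, the tensor product of the on-site Pauli monomials $P_{q_k u^k}$ over the sites $k$ with $q_k\neq 0$. Since $\omega$ factorizes over sites,
\begin{equation}
|\omega(P_q)|=\prod_{k:\,q_k\neq 0}|\omega(P_{q_k u^k})|\le \lambda(\omega)^{|q|},
\end{equation}
where the per-site bound is exactly the on-site reformulation of P-genericity noted just before the theorem, and $|q|$ is the number of sites where $P_q$ acts nontrivially. Set $\lambda:=\lambda(\omega)<1$.

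Next, for any nonzero $q\in\M$, weak diffusivity of $L$ supplies a density-$1$ subset $J_q\subset\NN$ along which $|L^nq|\to\infty$, so $|\omega(\alpha^n(P_q))|\le \lambda^{|L^nq|}\to 0$ along $J_q$ (the sign in $\alpha(P_q)=\pm P_{Lq}$ disappears under the absolute value). A general local observable $a$ has a finite Pauli expansion $a=\sum_{q\in K}c_q P_q$; taking $J_a=\bigcap_{q\in K\setminus\{0\}}J_q$, which is still density-$1$ as a finite intersection of density-$1$ sets, one obtains $\lim_{n\in J_a}\omega(\alpha^n(a))=c_0=\omega_\infty(a)$, which is the desired weak thermalization. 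If $\alpha$ is in addition strongly diffusive then $|L^nq|\to\infty$ along all of $\NN$ for every nonzero $q$, so one may take $J_a=\NN$ and the same line yields strong thermalization.

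There is no serious obstacle here; the statement is essentially a structural extension of Theorem \ref{th:fractaltherm}. The only point that needs a sentence of justification is that P-genericity, defined as a supremum over \emph{all} nontrivial Pauli monomials, implies the uniform on-site bound used above — this is the content of the remark preceding the theorem and depends only on the product structure of $\omega$, not on translation invariance, so the argument covers spatially inhomogeneous product states as well.
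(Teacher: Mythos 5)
Your proposal is correct and follows exactly the route the paper intends: the paper's "proof" of this theorem is literally the one-line remark that the same argument as Theorem \ref{th:fractaltherm} applies, and you have carried out that argument with the right adjustments (the per-site factorization giving $|\omega(P_q)|\le\lambda(\omega)^{|q|}$, weak diffusivity from Theorem \ref{th:diffusiveindensityalg}, finite intersections of density-$1$ sets, and the strong-diffusivity variant). Nothing further is needed.
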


As in \cite{CQCA2}, one can generalize these results by replacing P-generic product states by states of the form $\omega=\omega_0\circ\beta$ where $\omega_0$ is a P-generic product state and $\beta$ is an arbitrary Clifford QCA. 
\begin{theorem}
    Let $\alpha$ be a soliton-free Clifford QCA, $\beta$ be an arbitrary Clifford QCA, and $\omega_0$ be a P-generic product state. Then $\alpha$ weakly thermalizes $\omega_0\circ\beta$. If $\alpha$ is strongly diffusive, then it strongly thermalizes $\omega_0\circ\beta$.
\end{theorem}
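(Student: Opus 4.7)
The plan is to reduce to Theorem \ref{th:Pgeneric} by absorbing $\beta$ into the dynamics. For any local observable $a$, the identity $(\beta\alpha\beta^{-1})^n = \beta\alpha^n\beta^{-1}$ yields
\begin{align}
\omega(\alpha^n(a)) \;=\; \omega_0\bigl(\beta(\alpha^n(a))\bigr) \;=\; \omega_0\bigl((\beta\alpha\beta^{-1})^n(\beta(a))\bigr).
\end{align}
Setting $\alpha' := \beta\alpha\beta^{-1}$ and $a' := \beta(a)$, this reads $\omega(\alpha^n(a)) = \omega_0((\alpha')^n(a'))$. Here $a'$ is again a local observable since $\beta$ has finite range, $\alpha'$ is a translationally-invariant Clifford QCA since $\alpha$ and $\beta$ are, and $\omega_\infty(a') = \omega_\infty(a)$ because $\omega_\infty$ is the unique tracial state on $\SA$ and $\beta$ is an algebra automorphism.

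It remains to show that $\alpha'$ inherits the relevant dynamical property from $\alpha$. At the level of matrices, $\alpha'$ corresponds to $L' = BLB^{-1}$, where $B\in GL(2N,\R)$ represents $\beta$. For soliton-freeness, suppose $(L')^n q = u^k q$ for some nonzero $q\in\M$, $n>0$, and $k\in\ZZ^d$; multiplying on the left by $B^{-1}$ and using that $B^{-1}$ is $\R$-linear (so it commutes with multiplication by the scalar $u^k$), one obtains $L^n(B^{-1}q) = u^k(B^{-1}q)$ with $B^{-1}q\neq 0$, contradicting soliton-freeness of $\alpha$. For strong diffusivity, the fact that the entries of $B$ and $B^{-1}$ are Laurent polynomials with finitely many nonzero coefficients yields a constant $C>0$ such that $|Bv|,|B^{-1}v|\leq C|v|$ for every $v\in\M$. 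Consequently
\begin{align}
|(L')^n q| \;=\; |B L^n B^{-1} q| \;\geq\; \frac{1}{C}\,|L^n B^{-1} q|,
\end{align}
and the right-hand side tends to infinity as $n\to\infty$ since $L$ is strongly diffusive and $B^{-1} q\neq 0$.

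Now Theorem \ref{th:Pgeneric}, applied to $\alpha'$ and the P-generic product state $\omega_0$, supplies for the local observable $a'$ a density-1 subset $J_{a'}\subset\NN$ such that $\lim_{n\in J_{a'}}\omega_0((\alpha')^n(a')) = \omega_\infty(a')$ in the soliton-free case, and an unrestricted limit in the strongly diffusive case. Combining this with the identity of the first paragraph and $\omega_\infty(a') = \omega_\infty(a)$ yields the desired conclusion with $J_a := J_{a'}$. The only non-routine step is the conjugation argument of the second paragraph, and it is precisely the place where invertibility of $\beta$ (hence finiteness of the range of $B^{-1}$) is essential; everything else is bookkeeping.
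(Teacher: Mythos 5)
Your proposal is correct and follows essentially the same route as the paper: reduce to Theorem \ref{th:Pgeneric} via the identity $\omega_0\circ\beta\circ\alpha^n=\omega_0\circ(\beta\alpha\beta^{-1})^n\circ\beta$, and check that conjugation by the pseudo-unitary matrix $B$ preserves soliton-freeness (hence, by Theorem \ref{th:diffusiveindensityalg}, weak diffusivity) and strong diffusivity. The paper states these two conjugation-invariance facts without proof, so your explicit verifications (the eigenvector transport $L^n(B^{-1}q)=u^k(B^{-1}q)$ and the two-sided Hamming-weight comparison via the finite range of $B$ and $B^{-1}$) simply fill in details the paper leaves to the reader.
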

\begin{proof}
    First, note that $\omega_0\circ\beta$ is strongly (resp. weakly) thermalized by a TI Clifford QCA $\alpha$ if and only if $\omega_0$ is strongly (resp. weakly) thermalized by $\beta\circ\alpha\circ\beta^{-1}$. Second, suppose $M$ and $M'$ be two pseudo-unitary elements of $GL(2N,\R)$. Then $M'M {M'}^{-1}$ is strongly (resp. weakly) diffusive if $M$ is strongly (resp. weakly) diffusive. Combining this with Theorem \ref{th:diffusiveindensityalg} and Theorem \ref{th:Pgeneric}, we get the desired result.
\end{proof}

Even more generally, consider a state of the form $\omega=\omega_0\circ\beta$ where $\omega_0$ is a P-generic product state and $\beta$ is an arbitrary (not necessarily translationally-invariant or Clifford) QCA. Such a state $\omega$ may be called short-range entangled (SRE), since correlations vanish beyond a certain range $r$ (the range of the QCA $\beta$).\footnote{Our terminology is somewhat non-standard here. Usually, an SRE state is defined as a state of the form $\omega_0\circ\beta$ where $\omega_0$ is a {\it pure} product state and $\beta$ is a finite-depth unitary circuit.} We will now prove that a soliton-free Clifford QCA thermalizes any SRE state which is sufficiently close to the equilibrium state. 
\begin{theorem}\label{th:SREtherm}
    Consider a $d$-dimensional system of qubits with $N$ qubits per unit cell. Let $\alpha$ be a soliton-free Clifford QCA and $\beta$ be any (not necessarily translationally-invariant or  Clifford) QCA. There exists $C_\beta>1$ that  depends only on $d$, $N$ and the range of $\beta$ such that $\alpha$ weakly thermalizes any state of the form $\omega=\omega_0\circ\beta$ where $\omega_0$ is a product state with  
    \begin{align}
        \lambda(\omega_0) C_\beta<1. 
    \end{align}
     If, in addition, $\alpha$ is strongly diffusive, then it strongly thermalizes such an $\omega$. 
\end{theorem}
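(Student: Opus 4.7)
The plan is to combine weak diffusivity of $\alpha$ with an exponential decay estimate for $|\omega_0(\beta(P_{q'}))|$ in the Hamming weight $|q'|$. Since $\alpha$ is Clifford, $\alpha^n(P_q)=\pm P_{L^n q}$, so
$$
\omega(\alpha^n(P_q))=\pm\,\omega_0(\beta(P_{L^n q})).
$$
By Theorem~\ref{th:diffusiveindensityalg}, $|L^n q|\to\infty$ along a density-1 subset $J_q\subset\NN$, and along the full sequence if $\alpha$ is strongly diffusive. It therefore suffices to prove a bound $|\omega_0(\beta(P_{q'}))|\leq(\lambda(\omega_0)\,C_\beta)^{c|q'|}$ for some $c>0$ and some $C_\beta>1$ depending only on $d$, $N$, and the range $r$ of $\beta$.

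I would obtain this via a cluster decomposition. Partition $\mathrm{supp}(q')$ into maximal clusters $I_1,\dots,I_m$ under the transitive closure of the relation $j\sim j'$ iff $|j-j'|_\infty\leq 2r$. The $r$-neighborhoods $R_{I_k}=\bigcup_{j\in I_k}B(j,r)$ of distinct clusters are disjoint, so $\beta(P_{q'})=\pm\prod_k\beta(P_{q'_{I_k}})$ is a product of unitaries on disjoint regions, and the product structure of $\omega_0$ gives
$$
\omega_0(\beta(P_{q'}))=\pm\prod_k\omega_0\bigl(\beta(P_{q'_{I_k}})\bigr).
$$
For each cluster, expand $\beta(P_{q'_{I_k}})=\sum_{q''}c_{q''}P_{q''}$ in the Pauli basis on $R_{I_k}$. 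Using $\beta$-invariance of $\omega_\infty$ one sees that $c_{q''}$ is, up to a sign, the coefficient of $P_{q'_{I_k}}$ in the Pauli expansion of $\beta^{-1}(P_{q''})$; in particular $c_{q''}\neq 0$ forces $I_k\subset B(\mathrm{supp}(q''),r)$, so $\mathrm{supp}(q'')$ must $r$-cover $I_k$. Since a single site $r$-covers at most $(2r+1)^d$ lattice sites, this yields the key combinatorial inequality $|q''|\geq|I_k|/(2r+1)^d$.

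Combining $|\omega_0(P_{q''})|\leq\lambda^{|q''|}$ with the Cauchy--Schwarz bound $\sum_{q''}|c_{q''}|\leq 2^{N|R_{I_k}|}\leq 2^{N(2r+1)^d|I_k|}$ gives
$$
|\omega_0(\beta(P_{q'_{I_k}}))|\;\leq\;\lambda^{|I_k|/(2r+1)^d}\cdot 2^{N(2r+1)^d|I_k|}\;=\;(\lambda\,C_\beta)^{|I_k|/(2r+1)^d},
$$
with $C_\beta=2^{N(2r+1)^{2d}}$. Taking the product over clusters yields $|\omega_0(\beta(P_{q'}))|\leq(\lambda C_\beta)^{|q'|/(2r+1)^d}$, which tends to $0$ as $|q'|\to\infty$ under the hypothesis $\lambda C_\beta<1$. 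Applying this with $q'=L^n q$ and invoking weak (resp.\ strong) diffusivity of $\alpha$ gives weak (resp.\ strong) thermalization of $\omega=\omega_0\circ\beta$. The main delicate step is the covering inequality $|q''|\geq|I_k|/(2r+1)^d$, which uses that $\beta^{-1}$ is also a QCA of finite range and that clusters are connected by construction; once this is in hand, the rest of the estimate is a routine Cauchy--Schwarz calculation glued together by the product structure of $\omega_0$.
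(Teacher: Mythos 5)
Your proof is correct and follows essentially the same route as the paper: expand $\beta(P_{L^nq})$ in the Pauli basis, bound the $\ell^1$-norm of the coefficients by Cauchy--Schwarz using unitarity of $\beta$ with respect to $\omega_\infty$, lower-bound the Hamming weight of the contributing monomials via the covering inequality coming from the finite range of $\beta^{-1}$, and feed the result into the product-state estimate $|\omega_0(P_{q''})|\leq\lambda^{|q''|}$ before invoking weak (resp.\ strong) diffusivity. The only difference is your cluster decomposition, which is harmless but unnecessary: since $|\omega_0(P_{q''})|\leq\lambda^{|q''|}$ already holds for an arbitrary Pauli monomial regardless of the geometry of its support, all three estimates can be applied globally to $q'=L^nq$ at once, yielding the same constant $C_\beta=2^{NV_r^2}$ with $V_r=(2r+1)^d$.
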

\begin{proof}
    First, we need to clarify the notion of the range of a QCA. Let us fix some distance function on $\ZZ^d$. Apart from the usual Euclidean distance, one can use the "Manhattan distance"  $d_1(x,y)=\sum_{i=1}^d |x_i-y_i|$ or the sup-distance $d_\infty(x,y)={\rm max}_i |x_i-y_i|$. For any $\Gamma\subset \ZZ^d$ and any $r\geq 0$ we denote by $\Gamma^r$ the $r$-thickening of $\Gamma$, i.e. the set of all lattice points which are within distance $r$ from $\Gamma$. For example, $\{j\}^r$ is the ball of radius $r$ and center $j$. Note that $|\Gamma^r|\leq |\Gamma|\cdot V_r$, where $V_r$ is the number of lattice points in $\{j\}^r$.
    
    We say that $\beta$ has range less or equal than $r$ if for any $j\in\ZZ^d$ and any observable $a$ localized on $j$ $\beta(a)$ is localized on $\{j\}^r$. The range of $\beta$ is the smallest $r$ with this property. If $\beta$ has range $r$ and $a$ is any observable localized on $\Gamma$, then $\beta(a)$ is localized on $\Gamma^r$. Note also that the range of $\beta^{-1}$ is equal to the range of $\beta$. 

    For any non-zero $q\in\M$ we denote by $\Gamma_q\subset\ZZ^d$ the support of the local observable $P_q$, so that $|\Gamma_q|=|q|$. We can expand $\beta(P_q)$ as a finite sum
    \begin{align}\label{eq:betasum}
    \beta(P_q)=\sum_{q'\in\M\backslash \{0\}} c_q^{q'} P_{q'}.
    \end{align}
    The number of terms in this sum does not exceed the number of nontrivial Pauli monomials localized on $\Gamma^r_q$, i.e. $2^{2N|\Gamma_q^r|}-1$ which is upper-bounded by $2^{2NV_r|q|}$. 
    We claim that 
    \begin{align}\label{eq:l1est}
        \sum_{q'} \left|c^{q'}_q\right|\leq 2^{N V_r|q|}
    \end{align}
    Indeed, the Pauli monomials $P_q$ form an orthonormal basis in the space of local observables with respect to the scalar product
    $\langle a,b\rangle=\omega_\infty(a^*b)$. Since $\beta$ preserves this scalar product, we have
    \begin{align}
        \sum_{q'}\left|c^{q'}_q\right|^2=1. 
    \end{align}\
    Applying Cauchy-Schwarz, we get (\ref{eq:l1est}). 

    We can also lower-bound the Hamming weight of Laurent polynomials $q'$ appearing in the sum (\ref{eq:betasum}) in terms of the Hamming weight of $q$. Indeed, since $c^{q'}_q=(P_{q'},\beta(P_q))=(\beta^{-1}(P_{q'}),P_q)$, for $c^{q'}_q$ to be non-zero we must have $\Gamma_q\subseteq \Gamma^r_{q'}$, hence $|q|\leq V_r|q'|$. 
    
    Since $\omega_0$ is a product state, for any local observable $a=\sum_q c_q P_q$ we have
    \begin{align}\label{eq:est2}
        \left|\omega_0(a)\right|\leq \sum_q |c_q| \lambda(\omega_0)^{|q|}.
    \end{align}
    Combining (\ref{eq:l1est}) and (\ref{eq:est2}) and using $|q'|\geq |q|/V_r$ we get 
    \begin{align}
        |\omega_0\circ\beta(\alpha^n(P_q))|\leq \mu^{|L^n q|},
    \end{align}
    where $L\in GL(2N,\R)$ is the matrix defining $\alpha$ and
    \begin{align}
        \mu=\lambda(\omega_0)^{\frac{1}{V_r}} 2^{N V_r}.
    \end{align}
    Thus if we let 
    \begin{align}
        C_\beta=2^{N V_r^2},
    \end{align}
    we get the desired result.
\end{proof}

\begin{corollary}
    Let $\omega$ be an SRE state, i.e. $\omega=\omega_0\circ\beta$ for some (not necessarily TI or Clifford) QCA $\beta$ and some product state $\omega_0$. There exists a constant $C_\beta>1$ which depends only on the range of $\beta$, $d$, and $N$ such that 
    \begin{align}
        \|\omega-\omega_\infty\|< 1/C_\beta
    \end{align}
    implies that $\omega$ is weakly (resp. strongly) thermalized by any weakly (resp. strongly) diffusive  Clifford QCA. 
\end{corollary}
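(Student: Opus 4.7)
The plan is to reduce the corollary to Theorem~\ref{th:SREtherm} by first showing that the distance $\|\omega-\omega_\infty\|$ is unchanged by composition with the QCA $\beta$, and then translating a bound on that distance into the hypothesis $\lambda(\omega_0)\,C_\beta<1$ needed to apply the theorem.

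The first step, and the only conceptual ingredient, is the $\beta$-invariance of $\omega_\infty$ for \emph{every} QCA, not just Clifford ones. I would argue this via uniqueness of the tracial state on $\SA$: for any automorphism $\beta$ of $\SA$ the pullback $\omega_\infty\circ\beta$ satisfies $(\omega_\infty\circ\beta)(ab)=\omega_\infty(\beta(a)\beta(b))=\omega_\infty(\beta(b)\beta(a))=(\omega_\infty\circ\beta)(ba)$, so it is itself tracial and hence equal to $\omega_\infty$. Consequently $\omega-\omega_\infty=(\omega_0-\omega_\infty)\circ\beta$, and since $\beta$ is a $*$-automorphism it acts isometrically and bijectively on $\SA$, so the dual norm is preserved: $\|\omega-\omega_\infty\|=\|\omega_0-\omega_\infty\|$.

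The remaining step is a one-line combination with Proposition~\ref{th:trivial}. Taking $C_\beta$ equal to the constant produced in the proof of Theorem~\ref{th:SREtherm}, the hypothesis $\|\omega-\omega_\infty\|<1/C_\beta$ becomes $\|\omega_0-\omega_\infty\|<1/C_\beta<1$, and Proposition~\ref{th:trivial} then yields $\lambda(\omega_0)\le\|\omega_0-\omega_\infty\|<1/C_\beta$, i.e.\ $\lambda(\omega_0)\,C_\beta<1$. Theorem~\ref{th:SREtherm} then delivers weak thermalization when $\alpha$ is weakly diffusive and strong thermalization when $\alpha$ is strongly diffusive, which is exactly the conclusion of the corollary.

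I do not expect a substantive obstacle. The only trap is that $\beta$ is not assumed to be Clifford, so one cannot reduce the invariance of $\omega_\infty$ to a Pauli-monomial computation; the tracial-uniqueness argument above sidesteps this uniformly and requires no further structural hypotheses on $\beta$. Note also that no enlargement of $C_\beta$ is needed in passing from Theorem~\ref{th:SREtherm} to the corollary, since the reduction $\|\omega-\omega_\infty\|=\|\omega_0-\omega_\infty\|$ is exact.
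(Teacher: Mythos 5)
Your proposal is correct and follows essentially the same route as the paper, which likewise deduces the corollary from Proposition \ref{th:trivial} together with the identity $\|\omega-\omega_\infty\|=\|\omega_0-\omega_\infty\|$ and then invokes Theorem \ref{th:SREtherm}. The only difference is that you spell out why that identity holds (uniqueness of the tracial state plus the isometry of $*$-automorphisms), a fact the paper states without proof.
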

\begin{proof}
    Follows from Proposition \ref{th:trivial} and the fact that $\|\omega-\omega_\infty\|=\|\omega_0-\omega_\infty\|$.
\end{proof}
Note that the larger is the range of $\beta$, the larger is $C_\beta$, and thus the closer the state $\omega$ should be to $\omega_\infty$ to guarantee thermalization. 

One can enlarge the set of thermalized states even further. Let $\omega$ be any state thermalized by a Clifford QCA $\alpha$. Performing a local von Neumann measurement on $\omega$ results in a state $\omega'$ which has the form
\begin{align}
    \omega'(a)=\frac{\omega(b a b)}{\omega(b)},
\end{align}
where $b=b^*$ is a projector in the algebra of local observables. We can expand $b$ as a finite sum, $b=\sum_{i\in I} c_i P_{q_i}$, where $c_i,i\in I$ are real numbers. Then for any $q\in\M$ we have
\begin{equation}
    \omega'(\alpha^n(P_q))=\frac{1}{\omega(b)}\sum_{i,j\in I}\pm c_i c_j\omega(P_{q_i+q_j+L^n q}).
\end{equation}
If $L$ is strongly diffusive and $q\neq 0$, all terms on the right-hand side tend to zero as $n\ra\infty$, hence the state $\omega'$ is strongly thermalized. For a weakly diffusive Clifford QCA, the same argument works if we exclude some zero-density subset of times.

Finally, let us briefly discuss the thermalization of TI Pauli stabilizer states. These states are associated to Lagrangian (i.e. maximal isotropic) $\R$-submodules of $\M$ and defined by the condition $\omega(P_q)=\pm 1$ whenever $q$ lies in the chosen Lagrangian submodule. For $N>1$, soliton-free Clifford QCA may fail to thermalize some TI Pauli stabilizer states. For example, a Clifford QCA obtained from a CA by the doubling trick always has an  invariant Lagrangian submodule. If $q$ belongs to this submodule and $\omega$ is a Pauli stabilizer state corresponding to this submodule, then $\omega(\alpha^n(P_q))=\pm 1$ for all $n$, hence $\alpha$ does not thermalize $\omega$, whether weakly or strongly. To ensure thermalization of arbitrary Pauli stabilizer states, one needs to require that $\alpha$ does not have invariant Lagrangian submodules. 

\section{Numerical results}

By Theorem \ref{th:SREtherm}, diffusive Clifford QCAs thermalize short-range entangled states which are sufficiently close to the equilibrium state. It is natural to wonder if the condition of being sufficiently close to $\omega_\infty$ can be dropped. In this section we explore  this question numerically for a particular class of examples. 

\begin{figure}
    \centering
    \includegraphics[width=1\linewidth]{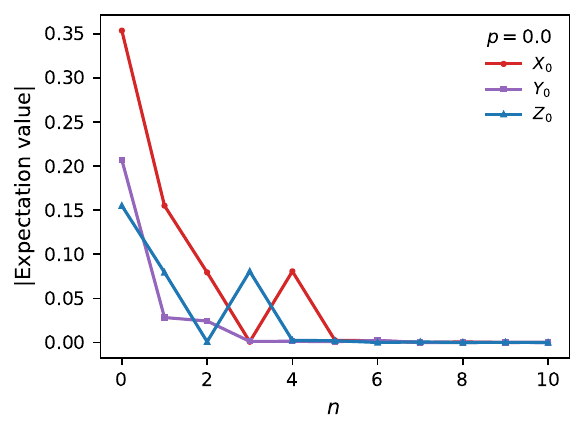}
    \caption{Absolute values of expectation values $\omega_0\circ\beta(\alpha^n(P_q))$ of single-site Pauli monomials as functions of $n$. The product state $\omega_0$ corresponds to $p=0$, $\theta=30^{\circ}$, $\phi=45^{\circ}.$ The non-Clifford QCA $\beta$ corresponds to (\ref{eq:nonClbeta}) with $r=1$.}
    \label{fig:Fig4}
\end{figure}

\begin{figure}
    \centering
    \includegraphics[width=1\linewidth]{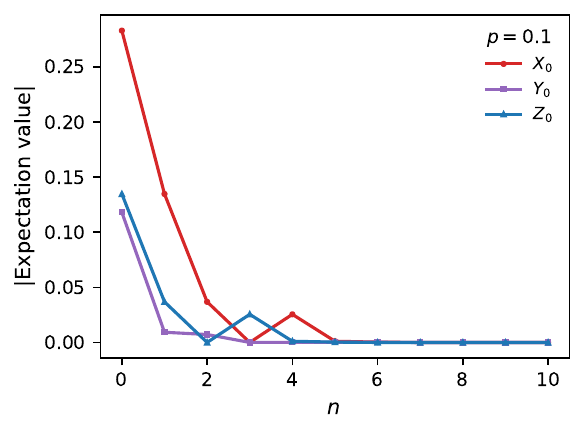}
    \caption{Absolute values of expectation values $\omega_0\circ\beta(\alpha^n(P_q))$ of single-site Pauli monomials as functions of $n$. The product state $\omega_0$ corresponds to $p=0.1$, $\theta=30^{\circ}$, $\phi=45^{\circ}.$ The non-Clifford QCA $\beta$ corresponds to (\ref{eq:nonClbeta}) with $r=1$.}
    \label{fig:Fig5}
\end{figure}

\begin{figure}
    \centering
    \includegraphics[width=1\linewidth]{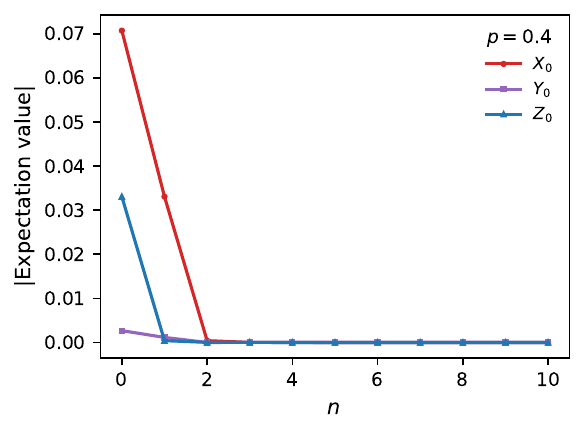}
    \caption{Absolute values of expectation values $\omega_0\circ\beta(\alpha^n(P_q))$ of single-site Pauli monomials as functions of $n$. The product state $\omega_0$ corresponds to $p=0.4$, $\theta=30^{\circ}$, $\phi=45^{\circ}.$ The non-Clifford QCA $\beta$ corresponds to (\ref{eq:nonClbeta}) with $r=1$. }
    \label{fig:Fig6}
\end{figure}

We let $\alpha$ be the Clifford QCA with the matrix $L$ of the form (\ref{eq:Mexample}) with $t(u)=u+u^{-1}+1$. This QCA is weakly diffusive by Theorem \ref{th:diffusiveindensityalg} and Appendix A. We take $\omega_0$ to be a translationally-invariant product state corresponding to an on-site density matrix $\rho$. $\rho$ is specified by a choice of  eigenvalues $p$ and $1-p$, $p\in [0,0.5)$, and a point on the Bloch sphere parameterized by spherical coordinates $(\theta,\phi)$. Finally, we take a translationally-invariant but non-Clifford QCA $\beta$ of the form 
\begin{equation}
    \beta= \prod_{j\in\ZZ} {\rm Ad}_{U_{j,j+1}},
\end{equation}
where $U_{j,j+1}$, $j\in\ZZ$, is a unitary matrix acting in the Hilbert space of two qubits on sites $j$ and $j+1$. We choose $U_{j,j+1}$ so that it commutes with its translates. As a concrete example of such matrices, we consider
\begin{equation}\label{eq:nonClbeta}
    U^{(1)}_{j,j+1}=\exp(i r X_j\otimes X_{j+1}),
\end{equation}
where $r\in [0,2\pi)$ is a parameter.  Note that $\beta$ has range $1$.

We computed the absolute value of $\omega_i(P_{L^n q})=\omega_0\circ\beta (P_{L^n q})$ as a function of $n$, $n\leq 10$, for several choices of the Laurent polynomial $q$ and parameters of $\omega_0$. The computation algorithm is described in Appendix B. We considered three values of $p$: $p=0$, $p=0.1$, and $p=0.4$. The first one corresponds to a pure initial state, the remaining two to mixed initial states. In Fig. 4-6 we display the results for $q$ corresponding to on-site Pauli monomials $X_0,Y_0,Z_0$ for some representative values of $r$ and angles $(\theta,\phi)$. We see that in all cases the convergence to equilibrium is quite rapid. Changing the parameters gives qualitatively similar results. We also tested some other values for the initial Laurent polynomial $q$ and obtained very similar results. Finally, we extended the simulation to $n\leq 1000$ and found that the expectation values remain extremely close to the equilibrium values. Thus, the numerical evidence suggests that thermalization persists for generic values of all parameters.

\section{Discussion}

In this paper, we examined quantum chaotic behavior arising in Floquet quantum spin systems without disorder. This is to be contrasted with quantum spin systems driven by random unitary circuits (see \cite{randomquantumcircuits} for a review) or spatially random circuits \cite{randomTP1,randomTP2}. We generalized the results of \cite{CQCA2} to Clifford QCAs in general dimensions and an arbitrary number of qubits per site. For simplicity, we stated our results for systems of qubits, but identical results hold for systems of qudits. We showed that soliton-free Clifford QCAs (i.e. those Clifford QCAs which do not preserve any local quantity up to translation) thermalize a large class of states, including all SRE states sufficiently close to the equilibrium state and states obtained from these by local von Neumann measurements. Moreover, our numerical simulations indicate that soliton-free Clifford QCAs thermalize a larger class of SRE states which are quite far from equilibrium and are not covered by our rigorous results. All of this evidence supports the identification of soliton-free Clifford QCAs as  examples of chaotic deterministic quantum dynamics. 

We still lack a precise criterion for what counts as deterministic chaos in the quantum case. Such a definition should be possible for infinite many-body quantum systems with sufficiently local dynamics, because such systems have a distinguished coarse-graining procedure.


Throughout this paper, we emphasized the distinction between strong and weak thermalization, i.e. between thermalization for all times and for almost all times. From the experimental standpoint, weak thermalization is difficult to distinguish from strong thermalization, but is easier to prove mathematically.

\appendix

\section{Mixing and weak diffusivity of particular CAs}

\begin{prop}
    Let \begin{equation}\label{eq:Lspecial}
L(u)=\begin{pmatrix}
    0 & 1 \\ 1 & t(u)
\end{pmatrix}
\end{equation}
where $t(u)\in\R$. $L(u)$ is mixing if and only if $t(u)$ is not a constant.
\end{prop}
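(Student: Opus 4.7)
The plan is to apply Theorem \ref{th:mixingCA}, which reduces the question to showing that $\det(L^n - I) \neq 0$ in $\R$ for every $n \geq 1$ precisely when $t$ is non-constant. To get a tractable description of $L^n$, I would invoke Cayley--Hamilton: since the characteristic polynomial of $L$ over $\R$ is $\lambda^2 + t\lambda + 1$, one has $L^2 = tL + I$ in $\mathrm{Mat}(2,\R)$ (signs are irrelevant in characteristic $2$). Iterating gives $L^n = p_n L + p_{n-1} I$, where the Chebyshev-like polynomials $p_n \in \R$ satisfy $p_0 = 0$, $p_1 = 1$, and $p_{n+1} = tp_n + p_{n-1}$.

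The next step is to derive the clean formula $\det(L^n - I) = t\,p_n$. A direct expansion of the $2 \times 2$ determinant of $L^n - I$ produces several terms that do not obviously combine; the simplification uses the quadratic identity $p_n^2 + p_{n-1}^2 + tp_np_{n-1} = 1$, which is nothing but $\det(L^n) = (\det L)^n = 1$ rewritten via the decomposition $L^n = p_n L + p_{n-1} I$. Substituting this identity collapses the expansion to $tp_n$. This bookkeeping step is the only potentially delicate part of the argument.

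It then remains to determine when $tp_n$ can vanish in $\R$. For constant $t$ we have $t \in \{0,1\}$: if $t = 0$ then $L^2 = I$, while if $t = 1$ the recursion becomes Fibonacci mod $2$, which has period $3$ and forces $p_3 = 0$; in either case $L$ fails the mixing criterion. For non-constant $t$, either the top Laurent degree of $t$ in $u$ is positive or the bottom Laurent degree is negative. In the first case, a short induction on $p_{n+1} = tp_n + p_{n-1}$ shows that the top degree of $p_n$ grows linearly in $n$, so $p_n \neq 0$ for all $n \geq 1$; the second case is symmetric by working with bottom degrees. Combining these with Theorem \ref{th:mixingCA} gives both directions of the equivalence.
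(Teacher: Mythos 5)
Your proof is correct and follows essentially the same route as the paper's: the same recursion $p_{n+1}=tp_n+p_{n-1}$ (the paper's $b_n$), the determinant identity from $\det L^n=1$ (your $\det(L^n-I)=tp_n$ equals the paper's $b_{n-1}+b_{n+1}$ via the recursion), and the same degree-growth induction to rule out vanishing when $t$ is non-constant. The only addition is that you spell out the converse (constant $t\in\{0,1\}$ gives $L^2=I$ or $L^3=I$), which the paper leaves implicit.
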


\begin{proof}
We need to prove that $\det(L(u)^n-1)\neq 0$ for all $n>0$ if $t(u)$ is not constant. For $n\geq 0$ we have 
\begin{equation}\label{eq:Mn}
    L(u)^n=\begin{pmatrix} b_{n-1} & b_n \\ b_n & b_{n+1}\end{pmatrix}
\end{equation}
where $b_n$, $n=0,1,\ldots$ is a sequence of Laurent polynomials uniquely defined by $b_0=0$, $b_1=1$, and 
\begin{equation}\label{eq:recursion}
    b_{n+1}= t b_n+b_{n-1}.
\end{equation} 
Also, since $\det L^n=1$ for all $n$, we have 
\begin{equation}\label{eq:Cassini}
    b_{n-1} b_{n+1}+b_n^2=1.
\end{equation}
Hence $\det (L^n-1)=b_{n-1}+b_{n+1}$. For any Laurent polynomial $p$, let $\deg_{\min} p$ (resp. $\deg_{\max}(p))$ be the smallest negative (resp. greatest positive) power of $u$ that occurs in $p(u)$. Then $\deg_{\min}(b_n)=(n-1)\deg_{\min}(t)$ and $\deg_{\max}(b_n)=(n-1)\deg_{\max}(t)$ and thus $b_{n-1}+b_{n+1}\neq 0$ for all $n>0$ if either $\deg_{\min}(t)\neq 0$ or $\deg_{\max}(t)\neq 0$.
\end{proof}

\begin{prop}
    The matrix $L$ given by (\ref{eq:Lspecial}) for some nonconstant palindromic Laurent polynomial $t\in\R$ is weakly diffusive if and only if $t(u)\neq u^m+u^{-m}$ for some $m\in\NN$.
\end{prop}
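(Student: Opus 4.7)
The plan is to apply the algebraic criterion of Theorem \ref{th:diffusiveindensityalg}: $L$ is weakly diffusive iff $\det(L^n - u^k) \neq 0$ for all $n \geq 1$ and $k \in \ZZ$. Substituting the formula (\ref{eq:Mn}) for $L^n$ and using the Cassini identity (\ref{eq:Cassini}) together with the recursion (\ref{eq:recursion}), a short calculation gives
\begin{align*}
\det(L^n - u^k) = u^{2k} + u^k\, t\, b_n + 1 \quad \text{in } \R.
\end{align*}
So weak diffusivity fails precisely when $t b_n = u^k + u^{-k}$ for some $n \geq 1$ and $k \in \ZZ$.

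For the ``only if'' direction I assume $t(u) = u^m + u^{-m}$ with $m \in \NN$ and verify directly that $q = (1, u^m)^T$ satisfies $Lq = u^m q$, exhibiting an explicit soliton and hence violating weak diffusivity.

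For the ``if'' direction I work in the auxiliary ring $R := \R[s]/(s^2 + ts + 1)$, a free rank-2 $\R$-module with basis $\{1, s\}$. The conjugate $\bar s := s + t$ is the other root of the defining quadratic, so $s\bar s = 1$ and $\bar s = s^{-1}$ in $R$. A straightforward induction using (\ref{eq:recursion}) gives $s^n \equiv b_n s + b_{n-1} \pmod{s^2 + ts + 1}$, hence $s^n + s^{-n} = t b_n$ and the factorization
\begin{align*}
u^{2k} + u^k\, t b_n + 1 = (s^n + u^k)(s^{-n} + u^k) \quad \text{in } R.
\end{align*}
Under the assumption $t \neq u^m + u^{-m}$ for every $m \in \NN$, I claim that $R$ is an integral domain: by Gauss's lemma applied to the primitive polynomial $s^2 + ts + 1$ over the PID $\R$, irreducibility in $\R[s]$ reduces to the absence of a root in $\R$; any root $\sigma \in \R$ would satisfy $\sigma(\sigma + t) = 1$, forcing $\sigma$ to be a unit $u^j$ in $\R$, whence $t = u^j + u^{-j}$ --- a contradiction. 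Using the explicit basis representation $s^n + u^k = b_n s + (b_{n-1} + u^k)$ together with the fact (immediate from the degree computation in the proof of the previous proposition) that $b_n \neq 0$ for all $n \geq 1$, the factor $s^n + u^k$ is nonzero in $R$; the conjugate factor $s^{-n} + u^k$ is likewise nonzero, either by applying the $\R$-linear involution $s \mapsto \bar s$ or by an analogous direct computation. Since $R$ is a domain, the product is nonzero in $R$, and therefore nonzero in $\R \subset R$, completing the proof.

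The main technical point is the dichotomy for the quadratic $s^2 + ts + 1$: it is irreducible over $\R$ exactly when $t$ is not of the exceptional form, and otherwise its roots are precisely the monomials $u^{\pm m}$ that appear in the soliton $(1, u^m)^T$. Once this ring-theoretic observation is isolated, everything else is routine bookkeeping in the $\R$-basis $\{1, s\}$ of $R$.
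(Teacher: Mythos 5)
Your proof is correct, and the hard direction is argued along a genuinely different route from the paper's. Both proofs reduce to the identity $\det(L^n-r)=r^2+rtb_n+1$ and the observation that everything hinges on the quadratic $s^2+ts+1$ via $s^n+s^{-n}=tb_n$; but from there the paper argues contrapositively: it assumes a soliton exists, places a root $\alpha$ of the quadratic in the algebraic closure of $\R$, deduces $\alpha^n=u^{\pm k}$, and then uses a Galois-descent argument plus a pole/zero count at $u=0,\infty$ to force $n\mid k$ and $\alpha=u^{\pm k/n}$, hence $t=u^{k/n}+u^{-k/n}$. You instead assume $t$ is not exceptional, show via Gauss's lemma and the integral closedness of the PID $\R$ that $s^2+ts+1$ is irreducible, so $R=\R[s]/(s^2+ts+1)$ is a domain, and then kill $\det(L^n-u^k)$-vanishing by the factorization $(s^n+u^k)(s^{-n}+u^k)$ with both factors nonzero (using $b_n\neq 0$, which follows from the degree formula and the fact that a nonconstant palindromic $t$ has $\deg_{\max}t>0$). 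Your version is shorter and avoids the algebraic closure entirely, at the cost of needing the explicit nonvanishing $b_n\neq 0$; the paper's version yields more, namely a classification of the solitons (it pins down $\alpha=u^{\pm k/n}$ and shows $n$ divides $k$). One small point of care: in your root argument, $\sigma=u^j$ with $j=0$ gives $t=1+1=0$, which is ruled out by the standing nonconstancy hypothesis rather than by the assumption $t\neq u^m+u^{-m}$, $m\in\NN$; this is worth a half-sentence but is not a gap.
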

\begin{proof}
Let $t(u)\in \R$ be a nonconstant palindromic Laurent polynomial, so that $L(u)$ is a pseudo-unitary element of $GL(2,\R)$. By Theorem \ref{th:diffusiveindensityalg}, it is sufficient to show that the equation
\begin{equation}
    L^nq=u^kq,\quad q\in \M
\end{equation}
for $q$ has non-zero solutions for some $n>0$ and some $k\in\ZZ$ if and only if $t(u)=u^m+u^{-m}$ for some $m\neq 0$. In fact, we will prove a slightly more general statement: the equation
\begin{equation}
    L^n q=r q,\quad q\in \M
\end{equation}
for $q$ has non-zero solutions for some $n>0$ and some Laurent polynomial $r$ if and only if $t(u)=u^m+u^{-m}$. 

Eq. (\ref{eq:Mn}) and eq. (\ref{eq:Cassini}) imply that 
\begin{equation}\label{eq:req}
    \det (L^n-r)=r^2+r t b_n+1.
\end{equation}
Let $t b_n=s_n$. We claim that any solution $r\in\R$ of $r^2+s_n r+1=0$ must be of the form $r=u^m$ for some $m\in\ZZ$. Indeed, on the one hand, $r$ divides $r^2+1$. On the other hand, $r^2+1=1 {\rm mod} r$, so ${\rm gcd}(r,1+r^2)=1$ (here we used that $\R$ is a Principal Ideal Domain). Hence $r$ must divide $1$, i.e. $r$ is a unit in the ring $\R$. Thus $r=u^k$ for some $k\in\ZZ$. 

The equation $r^2+s_n r+1=0$ is equivalent to  $s_n=u^k+u^{-k}$ where $s_n=t b_n$ and $b_n$ is determined by the recursion (\ref{eq:recursion}). We claim this implies $n$ divides $k$ and $t=u^{k/n}+u^{-k/n}$. Indeed, consider the equation 
\begin{equation}
    \alpha^2+ \alpha t+1=0,
\end{equation}
so that $t=\alpha+\alpha^{-1}$. Roots of this equation belong to the algebraic closure $\overline\R$ of $\R$. If $\alpha$ is one root, than $\alpha^{-1}$ is the other. Note now that $s_n=t b_n$ satisfies the same recursive equation as $b_n$, but with different initial conditions: $s_0=0$, $s_1=t$. Induction on $n$ shows that $s_n=\alpha^n+\alpha^{-n}$. Thus $\alpha^n$ satisfies the same quadratic equation as $r=u^k$, and therefore either $\alpha^n=u^k$ or $\alpha^n=u^{-k}$.

We have shown that $\alpha^n$, which a priori lies in the algebraic closure of $\R$, in fact lies in the field of fractions $\K$. This forces $\alpha$ itself to lie in $\K$. Indeed, if it were not in $\K$, then the Galois involution would exchange $\alpha$ and $\alpha^{-1}$ and $\alpha^n$ and $\alpha^{-n}$. But since $\alpha^n\in\K$, it must be Galois-invariant, hence $\alpha^{2n}=1$. Therefore $\alpha$ lies in the algebraic closure of $\FF_2$. But this contradicts the fact that $\alpha+\alpha^{-1}=t$ is a non-constant element of $\R$. Hence $\alpha\in \K$. This argument also shows that for a non-constant Laurent polynomial $t$ the integer $k$ must be non-zero.

We showed that $\alpha$ is a ratio of two polynomials with coefficients in $\FF_2$. On the other hand, $\alpha^n=u^{\pm k}$ where $k$ is non-zero. Looking at the behavior near $u=0$ and $u$ we see that $k$ must be divisible by $n$ and $\alpha$ must have the same order poles or zeros at $u=0$ and $u=\infty$ as $u^{\pm k/n}$. Hence $\alpha=u^{\pm k/n}$ and $t=u^{k/n}+u^{-k/n}$, as claimed.

In the opposite direction, if $t(u)=u^m+u^{-m},$ it is easy to check that $q(u)=(1,u^m)$ solves $Lq=u^mq$.
\end{proof}

\section{Evaluating averages of Pauli monomials in short-range entangled states}

We would like to evaluate state averages of the form
\begin{equation}
    \omega\left(\beta(P_{q_1}P_{q_2}\ldots P_{q_K})\right),
\end{equation}
where $P_{q_i}$ is an on-site Pauli monomial on site $i$, $\beta$ is a QCA of range $r$ (possibly non-Clifford and not translationally-invariant), and $\omega$ is a product state (possibly not translationally-invariant). The naive evaluation algorithm has a run-time that is exponential in $K$. Here we describe an algorithm with a run-time that is linear in $K$. 

First, we reduce the problem to the case $r=1$. We combine sets of $r$ consecutive sites into supersites, so that in terms of supersites, $\beta$ has range $1$. If the original problem has $N$ qubits per site, then each supersite has $rN$ qubits. 

From now on, we assume $r=1$ and $N$ qubits per site. Let $A$ be the algebra of observables on a single site. It has dimension $4^N$. We pick a basis $e_a$, $a=1,\ldots, 4^N,$ for $A$ so that $e_1$ is the unit of $A$ and the remaining $e_a$ are nontrivial Pauli monomials. We let $f^a_{bc}$ be the structure constants of $A$ in this basis, i.e.
\begin{equation}
    e_b e_c=\sum_a f^a_{bc} e_a.
\end{equation}

To simplify notation, we assume that $\omega$ and $\beta$ are translation-invariant. Then $\omega$ is completely described by $4^N$ real numbers  $\omega_a=\omega(e_a)$. $\beta$ is fully specified by its values on single-site Pauli monomials. Let $e^i_a$ be the Pauli monomial corresponding to $e_a$ which is localized on site $i$. Then 
\begin{equation}
    \beta(e^i_a)=\sum_{b,c,d}\beta^{bcd}_a e^{i-1}_b\otimes e^i_c\otimes e^{i+1}_d,
\end{equation}
and $\beta$ is completely determined by the $4^{4N}$ complex numbers $\beta^{bcd}_a$.

We define
\begin{equation}
    M^{kl}_{a,ij}=\sum_{b,c,d}\omega_d f^d_{ib} f^k_{jc} \beta^{bcl}_a
\end{equation}
We regard $M$ as the data of $4^N$ linear maps $M_a:A\otimes A\rightarrow A\otimes A$ defined by
\begin{equation}
    M_a(e_i\otimes e_j)=\sum_{k,l}M^{kl}_{a,ij} e_k\otimes e_l
\end{equation}
Finally, the expectation values of Pauli monomials on two sites in the state $\omega$ are described by a linear map $\gamma:A\otimes A\ra \CC$ whose components are 
\begin{equation}
    \gamma(e_i\otimes e_j)=\omega_i\omega_j. 
\end{equation}

Using this notation, the state average for $q_i=e_{a_i}$ can be computed as 
\begin{equation}
    \gamma M_{a_K} \ldots M_{a_2} M_{a_1}(e_1\otimes e_1).
\end{equation}
This expression can be viewed as a component of a Matrix Product State with the bond space $A\otimes A$ and the physical space $A$. Equivalently, it is an output a $\CC$-weighted automaton whose space of states is $A\otimes A$, the start vector is $e_1\otimes e_1$, the final dual vector is $\gamma$, and the transition matrix on an input $e_a$ is $M_a$. 

The same algorithm works when $\omega$ is an arbitrary product state and $\beta$ is an arbitrary QCA. The run-time is still linear in $K$. However, it grows exponentially with the range of $\beta$.



\bibliographystyle{unsrt} 
\bibliography{Bibliography.bib} 

@book {CFS,
    AUTHOR = {Cornfeld, I. P. and Fomin, S. V. and Sinai, Ya. G.},
     TITLE = {Ergodic theory},
    SERIES = {Grundlehren der mathematischen Wissenschaften [Fundamental
              Principles of Mathematical Sciences]},
    VOLUME = {245},
 PUBLISHER = {Springer-Verlag, New York},
      YEAR = {1982},
     PAGES = {x+486}
}

@book {billiards,
    AUTHOR = {Chernov, Nikolai and Markarian, Roberto},
     TITLE = {Chaotic billiards},
    SERIES = {Mathematical Surveys and Monographs},
    VOLUME = {127},
 PUBLISHER = {American Mathematical Society, Providence, RI},
      YEAR = {2006},
     PAGES = {xii+316},
      ISBN = {0-8218-4096-7},
   MRCLASS = {37D50 (37D45 37N20)},
  MRNUMBER = {2229799},
MRREVIEWER = {Serge\ E.\ Troubetzkoy},
       DOI = {10.1090/surv/127},
       URL = {https://doi.org/10.1090/surv/127},
}

@incollection {Pivatoreview,
    AUTHOR = {Pivato, Marcus},
     TITLE = {Ergodic theory of cellular automata},
 BOOKTITLE = {Cellular automata},
    SERIES = {Encycl. Complex. Syst. Sci.},
     PAGES = {373--418},
 PUBLISHER = {Springer, New York},
      YEAR = {2018}
}

@book {KatokHasselblatt,
    AUTHOR = {Katok, Anatole and Hasselblatt, Boris},
     TITLE = {Introduction to the modern theory of dynamical systems},
    SERIES = {Encyclopedia of Mathematics and its Applications},
    VOLUME = {54},
      NOTE = {With a supplementary chapter by Katok and Leonardo Mendoza},
 PUBLISHER = {Cambridge University Press, Cambridge},
      YEAR = {1995}
}

@book {AlickiFannes,
    AUTHOR = {Alicki, Robert and Fannes, Mark},
     TITLE = {Quantum dynamical systems},
 PUBLISHER = {Oxford University Press, Oxford},
      YEAR = {2001},
     PAGES = {xiv+278},
      ISBN = {0-19-850400-4},
   MRCLASS = {81-02 (46L60 46N50 47N50 82-02 82C10)},
  MRNUMBER = {2011940},
MRREVIEWER = {Demetris\ P. K. Ghikas},
       DOI = {10.1093/acprof:oso/9780198504009.001.0001},
       URL = {https://doi.org/10.1093/acprof:oso/9780198504009.001.0001},
}

@article {CNT,
    AUTHOR = {Connes, A. and Narnhofer, H. and Thirring, W.},
     TITLE = {Dynamical entropy of {$C^\ast$} algebras and von {N}eumann
              algebras},
   JOURNAL = {Comm. Math. Phys.},
  FJOURNAL = {Communications in Mathematical Physics},
    VOLUME = {112},
      YEAR = {1987},
    NUMBER = {4},
     PAGES = {691--719}
}

@article {CQCA1,
    AUTHOR = {Schlingemann, Dirk M. and Vogts, Holger and Werner, Reinhard
              F.},
     TITLE = {On the structure of {C}lifford quantum cellular automata},
   JOURNAL = {J. Math. Phys.},
  FJOURNAL = {Journal of Mathematical Physics},
    VOLUME = {49},
      YEAR = {2008},
    NUMBER = {11},
     PAGES = {112104, 21}
}

@ARTICLE{CQCA2,
       author = {{G{\"u}tschow}, Johannes and {Uphoff}, Sonja and {Werner}, Reinhard F. and {Zimbor{\'a}s}, Zolt{\'a}n},
        title = "{Time asymptotics and entanglement generation of Clifford quantum cellular automata}",
      journal = {Journal of Mathematical Physics},
         year = 2010,
        month = jan,
       volume = {51},
       number = {1},
        pages = {015203-015203},
          doi = {10.1063/1.3278513}
}

@article {PV1,
    AUTHOR = {Pivato, Marcus and Yassawi, Reem},
     TITLE = {Limit measures for affine cellular automata},
   JOURNAL = {Ergodic Theory Dynam. Systems},
  FJOURNAL = {Ergodic Theory and Dynamical Systems},
    VOLUME = {22},
      YEAR = {2002},
    NUMBER = {4},
     PAGES = {1269--1287},
}

@article {PV2,
    AUTHOR = {Pivato, Marcus and Yassawi, Reem},
     TITLE = {Limit measures for affine cellular automata. {II}},
   JOURNAL = {Ergodic Theory Dynam. Systems},
  FJOURNAL = {Ergodic Theory and Dynamical Systems},
    VOLUME = {24},
      YEAR = {2004},
    NUMBER = {6},
     PAGES = {1961--1980}
}

@article {ETDS,
author = {Hellouin de Menibus, B and Salo, V and Theyssier, G}, 
title={Characterizing asymptotic randomization in abelian cellular automata}, 
volume={40}, 
number={4}, 
journal={Ergodic Theory and Dynamical Systems}, 
year={2020}, 
pages={923–952}
}

@ARTICLE{Kentetal,
       author = {{Kent}, Brian and {Racz}, Sarah and {Shashi}, Sanjit},
        title = "{Scrambling in quantum cellular automata}",
      journal = {Physical Review B},
         year = 2023,
        month = apr,
       volume = {107},
       number = {14},
          eid = {144306},
        pages = {144306}
}

@ARTICLE{ShirvaniRogers,
       author = {{Shirvani}, M. and {Rogers}, T.~D.},
        title = "{Ergodic endomorphisms of compact abelian groups}",
      journal = {Communications in Mathematical Physics},
     keywords = {Neural Network, Statistical Physic, Complex System, Nonlinear Dynamics, Abelian Group},
         year = 1988,
        month = sep,
       volume = {118},
       number = {3},
        pages = {401-410},
          doi = {10.1007/BF01466724},
       adsurl = {https://ui.adsabs.harvard.edu/abs/1988CMaPh.118..401S},
      adsnote = {Provided by the SAO/NASA Astrophysics Data System}
}

@article{Rokhlin1949,
  title={On the endomorphisms of compact commutative groups},
  author={Rokhlin, Vladimir Abramovich},
  journal={Izvestiya Akademii Nauk SSSR. Seriya Matematicheskaya},
  volume={13},
  number={4},
  pages={329--340},
  year={1949},
  publisher={Russian Academy of Sciences}
}

@article{randomTP1,
  title = {Localization with random time-periodic quantum circuits},
  author = {S\"underhauf, Christoph and P\'erez-Garc\'{\i}a, David and Huse, David A. and Schuch, Norbert and Cirac, J. Ignacio},
  journal = {Phys. Rev. B},
  volume = {98},
  issue = {13},
  pages = {134204},
  numpages = {16},
  year = {2018},
  month = {Oct},
  publisher = {American Physical Society},
  doi = {10.1103/PhysRevB.98.134204},
  url = {https://link.aps.org/doi/10.1103/PhysRevB.98.134204}
}

@article{randomTP2,
    author = {Farshi, Tom and Toniolo, Daniele and González-Guillén, Carlos E. and Alhambra, Álvaro M. and Masanes, Lluis},
    title = {Mixing and localization in random time-periodic quantum circuits of Clifford unitaries},
    journal = {Journal of Mathematical Physics},
    volume = {63},
    number = {3},
    pages = {032201},
    year = {2022},
    month = {03},
    issn = {0022-2488},
    doi = {10.1063/5.0054863},
    url = {https://doi.org/10.1063/5.0054863},
    eprint = {https://pubs.aip.org/aip/jmp/article-pdf/doi/10.1063/5.0054863/16559664/032201_1_online.pdf},
}

@article{randomquantumcircuits,
  title={Random Quantum Circuits},
  author={Fisher, Matthew PA and Khemani, Vedika and Nahum, Adam and Vijay, Sagar},
  journal={Annual Review of Condensed Matter Physics},
  volume={14},
  number={1},
  pages={335--379},
  year={2023},
  publisher={Annual Reviews},
  doi={10.1146/annurev-conmatphys-031720-030658},
  url={https://doi.org/10.1146/annurev-conmatphys-031720-030658}
}

 \end{document}